\newtheorem{lemma}{Lemma}
\def\BibTeX{{\rm B\kern-.05em{\sc i\kern-.025em b}\kern-.08em
    T\kern-.1667em\lower.7ex\hbox{E}\kern-.125emX}}
\begin{document}

\title{OPAR: Optimized Predictive and Adaptive Routing for Cooperative UAV Networks}

\author{
	\IEEEauthorblockN{
	Mohammed Gharib\IEEEauthorrefmark{1}, Fatemeh Afghah\IEEEauthorrefmark{1}, Elizabeth Bentley \IEEEauthorrefmark{2}
	}
	\IEEEauthorblockA{\IEEEauthorrefmark{1}School of Informatics, Computing and Cyber Systems, Northern Arizona University, Flagstaff, AZ, USA\\ E-mail:\{mohammed.algharib, fatemeh.afghah\}@nau.edu}

	\IEEEauthorblockA{\IEEEauthorrefmark{2} Air Force Research Laboratory Rome, NY, USA, E-mail: elizabeth.bentley.3@us.af.mil}}


\maketitle

\begin{abstract}
Cooperative UAV networks are becoming increasingly popular in military and civilian applications. Alas, the typical ad-hoc routing protocols, which aim at finding the shortest path, lead to significant performance degradation because of the 3-dimension highly-dynamic nature of UAV networks and the uneven distribution of nodes across the network. This paper proposes OPAR, an optimized predictive and adaptive routing protocol, to face this challenging problem. We model the routing problem with linear programming (LP), where the goal is to maximize network performance, considering the path lifetime and path-length together. This model relies on a precise link lifetime prediction mechanism. We support the LP problem with a lightweight algorithm to find the optimized solution with a computation complexity of $O(|E|^2)$, where $|E|$ is the number of network links. We evaluate the OPAR performance and compare it with the well-known routing algorithms AODV, DSDV, and OLSR to cover a wide range of proactive and reactive protocols as well as distance vector and link-state techniques. We performed extensive simulations for different network densities and mobility patterns using the ns-3 simulator. Results show that OPAR prevents a high volume of routing traffic, increases the successful delivery by more than $30\%$, improves the throughput $25\%$ on average, and decreases the flow completion time by an average of $35\%$ \footnote{Distribution A. Approved for public release: Distribution unlimited MSC/PA-2020-0251; 88ABW-2020-3485 on 01 Dec 2020.}.
\end{abstract}

\begin{IEEEkeywords}
Routing, UAV, Optimization, Lifetime Prediction
\end{IEEEkeywords}

\section{Introduction}
Cooperative unmanned aerial vehicle (UAV) networks, also known as flying ad hoc networks (FANETs), are easy to deploy, low cost, and flexible networks \cite{ WalidSurvey,  UAV-CPSsurvey}. 
In many applications, the network is considered for surveillance missions where the UAVs are tasked with sending high volume data (video, images, and other sensory data) to other UAVs or a ground station. Critical UAV missions such as patrolling, where the UAVs are tasked with monitoring a wide area or performing reconnaissance missions in a region, is an application instance. In such applications, the UAVs may not follow pre-defined paths, and their motion should be randomized to some extent \cite{Kuiper,Orfanus}. Such applications involve reliable delivery of information and cannot afford to lose information as a likely case in common routing protocols. Multi-hop communication has been recognized as a common communication technique to transfer UAVs' observed information to the base station noting the limited communication range and bandwidth of the UAVs, particularly in large-scale UAV networks or in areas where communication infrastructure is not available.  

The majority of currently deployed cooperative routing protocols in UAV networks are borrowed from mobile ad hoc networks (MANET) and vehicular ad hoc networks (VANET). They target finding the shortest path between the source and destination nodes.  Ad-hoc on-demand distance vector (AODV), destination sequenced distance vector (DSDV), and optimized link state routing (OLSR) are some instances. However, these routing algorithms are deemed inefficient in UAV networks as they are not customized to these networks' specific attributes. The main difference between the UAV networks and other typical multi-hop networks is their highly dynamic nature due to the higher speed and speed variability of nodes. The free movement in the three-dimension space and energy constraints are the other reasons which impact the link quality and cause link outages. Consequently, the network experiences multiple reroute processes, which impose a high volume of routing traffic overhead. 
This overhead causes failures in many communication flows, significantly degrades the network performance, dramatically decreases the throughput, and increases the flow completion time (FCT). Furthermore, due to the limited queue size of cooperative nodes, the TCP protocol has to deal with a large number of packets arriving out-of-order as being dropped and order a re-transmission for them. A large amount of such out-of-order packet delivery is caused by the packets belonging to the same flow transferred from different paths due to the multiple reroute processes. This fact results in a noticeable network performance degradation. Hence, in highly dynamic networks with high mobility, the routing protocols based on the shortest path are not always the ideal solution \cite{survey4}. 

To minimize the number of rerouting processes in UAV multi-hop communications, we propose to consider path lifetime. We define the \textit{path lifetime} as the minimum link lifetime for the links that form a path. Since the path with the longest lifetime can be much longer than the shortest path, it might not be the ideal solution too. Therefore, in this work, we aim to find a path that optimizes the network performance in terms of throughput, flow completion time, and flow success rate by taking both the path length and path lifetime into account. 

Several research efforts have recently focused on developing routing solutions for cooperative UAV networks. Some of them are designed to support mission-specific UAV networks, and hence their solution cannot be generally used \cite{fueling, packageDelivery,redefine}. Just a couple of the proposed routing algorithms have accounted for the link reliability \cite{MIT,polsr,ON}. Alas, none of the previously reported works optimizes their routes to achieve the highest performance in terms of throughput, FCT, and flow success rate. This fact motivates us to generalize the routing problem of highly dynamic cooperative UAV networks by proposing OPAR, a general routing algorithm that takes both the path lifetime and path length into consideration to offer a reliable information flow delivery in highly dynamic networks.

We first define a general Boolean LP problem that optimizes the path length and path lifetime altogether. The LP problem relies on the link lifetime prediction. Thus, we propose a precise link lifetime prediction algorithm that considers the movement direction in polar and azimuthal directions, the speed, and the acceleration of the nodes. The proposed algorithm is lightweight and needs only three consecutive locations of the UAVs. 
We further propose a polynomial-time algorithm that finds the optimization problem's solution in a low time complexity with the worst-case complexity of $O(|E|^2)$. 

We assume that the UAVs have a reliable low bandwidth communication channel with the ground station. The UAVs send their locations to the ground station via a control message. The ground station is responsible for gathering this data and making the decision about the routes. Since the OPAR algorithm is lightweight, the ground station's role can be easily distributed among the UAVs. However, since there might not be a path between some network nodes, finding the optimal path may not be feasible in some cases, in the absence of the ground station. To show the effect of the proposed analytical model, in this work, we assume such a ground station.   

To evaluate the performance of OPAR, we use the network simulator ns-3 \cite{ns3}. We compare OPAR with three well-known routing algorithms, AODV, OLSR, and DSDV, through extensive simulations. We choose them to cover different types of routing algorithms, including both categories of proactive and reactive protocols as well as distance vector and link-state methodologies. As the performance metrics, we calculate the delivery success rate, the routing traffic, network throughput, and FCT. This evaluation shows that OPAR prevents the massive volume of routing overhead caused by the typical cooperative routing algorithms. It decreases the FCT by an average of $25 \%$ and improves the throughput of the network by up to three times the throughput of other routing algorithms. More importantly, the successful flow completion rate of OPAR is, in the worst case, $20\%$ more than other algorithms.   

\textbf{Contributions:} The main contribution of this work is to optimize the route selection in cooperative UAV networks to maximize the network performance in terms of throughput, FCT, and flow success rate. More specifically, this work $i)$ analytically models the routing problem in UAV networks by jointly maximizing the path lifetime and minimizing the path length; $ii)$ precisely predicts the link lifetime of the entire network; $iii)$ finds a lightweight algorithm to solve the optimization problem in polynomial time complexity; and $iv)$ extensively evaluates the proposed algorithms' performance by comparing it with the state-of-the-art through ns-3 simulations.

\textbf{System model:} We model this network with a graph $G(V,E)$, where $V$ and $E$ represent the set of graph vertices and graph edges, respectively. Each vertex in graph $G$ represents a UAV node, where each edge represents the direct communication between the corresponding vertices. The edge $e_{(i,j)}\in E$ means that node $n_j$ is in the communication range of node $n_i$. Since the nodes' communication range may differ from one to another, the graph links are directed. The UAV nodes are mobile; thus, each graph link has a lifetime, limited by the node's transmission range and distance from the neighbor. For the model to be general, the nodes can move randomly in any direction, hover to complete their tasks, land to recharge their battery or change their payload or move according to a pre-defined path. Indeed, less movement makes the problem more tractable. However, we consider the problem in its most general form.

\section{Proposed Algorithm}
\label{sec::proposedAlgorithm}

The goal is to find a path that maximizes the network's performance, knowing that the path length is not the only effective metric. Path lifetime is the other impacting factor that should be taken into account. Assuming that we have a precise link lifetime prediction matrix $\mathcal{T}$, where $\tau_{(i,j)} \in \mathcal{T}$ is the lifetime of the edge $e_{(i,j)}$. Let us consider the variable $x_{(i,j)}$ as a Boolean variable specifying whether the link $e_{(i,j)}$ participates in the optimized path or not. On the one hand, the objective function of minimizing $\sum_{}^{}x_{(i,j)}$, such that the chosen edges form a path from the source to the destination, guarantees the shortest path. On the other hand, maximizing $\tau$ where $\tau$ is the path lifetime guarantees the path with the longest lifetime. If we design a multi-objective optimization problem that considers both of the mentioned metrics together, the solution will be an area of feasible solutions. Based on the priority of the objectives, the solution could be any points in this area. However, we designed the following Boolean LP that considers both metrics of interest, simultaneously. In the proposed optimization problem, parameters $w_1$ and $w_2$ are the weights defining the trade-off between the longer lifetime and shorter path where $w_1+w_2=1$, $0 < w_1\leq 1$, and $0 \leq w_2\leq 1$. The higher value of these parameters leads to their higher impact on the optimization. The weights $w_1$ and $w_2$ are network setting parameters affected by many parameters, including but not limited to, the network area, network density, transmission range, and network load. 
In this optimization problem, variable $T$ represents the inverse of path lifetime,  Constraints (\ref{Const1}) and (\ref{Const2}) guarantee that the source node has one outgoing edge and no ingoing edge, Similarly, Constraints (\ref{Const3}) and (\ref{Const4}) guarantee that the destination node has one ingoing edge and no outgoing. Constraint (\ref{Const5}) guarantees that every vertex that has an ingoing edge has to have an outgoing edge, except for the source and the destination nodes. Constraint (\ref{Const6}) is the main constraint that maximizes path lifetime. In the rest of this section, we describe how to find the solution to the proposed Boolean LP problem and how to form the matrix $\mathcal{T}$. 

\begin{eqnarray}
 \min_{} & & \sum\limits_{\substack{(i,j)\in E\\i\in\{1,\cdots,n\}\\j\in\{1,\cdots,n\}}}  w_1 x_{(i,j) } + w_2 T 
 \label{TimeObj} \nonumber\\
   \mbox{Subject To:} & &  \sum\limits_{(s,i)\in E} x_{(s,i)} = 1 \label{Const1} \\
 & & \sum\limits_{(i,s)\in E} x_{(i,s)} = 0 \label{Const2} \\
& &\sum\limits_{(i,d)\in E} x_{(i,d)} = 1 \label{Const3} \\
& & \sum\limits_{(d,i)\in E} x_{(d,i)} = 0 \label{Const4} \\
& & \sum\limits_{\substack{(i,j)\in E\\ i\neq s }} x_{(i,j)} = \sum\limits_{\substack{(j,k)\in E \\ j\neq d}} x_{(j,k)} \label{Const5} \\
 && T \geq \frac{x_{(i,j)}}{\tau_{(i,j)}} \label{Const6}\\
 & & x_{(i,j)} \in \{0,1\} \label{NewConst7}\\
  & & 0 \leq T \leq1 \label{NewConst8}
\end{eqnarray}
\textbf{Optimization problem solution:} The defined optimization problem is a Boolean LP (BLP), which is well-known to be an NP-complete problem \cite{karp21}. However, it is a special BLP case, where the following proposed algorithmic method can find its solution, i.e. the optimized path. 

Assuming that we have a precise link lifetime prediction, we first sort the matrix of edge lifetimes in descending order. We then perform a breadth-first search (BFS) in the graph $G(V,E)$ to find the shortest path, without considering the link lifetimes. Let us refer to the shortest path as $p_0$. We then calculate the value of the objective function for the shortest path. Considering the link with the shortest link lifetime in path $p_0$ as $e_{(i',j')}$, we remove all the links with the lifetime lower or equal to the $\tau_{(i',j')}$. We then perform a BFS again, on the new graph and calculate the objective value for the new path. If the new objective value is less than the previous one, we replace the previous path with the new one and remove the edges with the lowest lifetime. We repeat this procedure until the BFS algorithm fails in finding a new path.
We show in Lemma (\ref{lem::optimizedPath}) that the output of this algorithm is the path with the optimized objective function. We further show that the computational complexity of this algorithm is on the order $O(|E||V|+|E|^2)$, in Lemma (\ref{lem::complexity}).
\begin{lemma}
\label{lem::optimizedPath}
Assume that there is at least one path from the source node to the destination. The output of the proposed Algorithm is the optimal path, i.e. the solution of the proposed Boolean LP problem. 
\end{lemma}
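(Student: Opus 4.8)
The plan is to first strip the Boolean LP down to a purely combinatorial statement about $s$–$d$ paths, and then to show that the sequence of paths the algorithm inspects necessarily contains an optimal one. For the reduction, I would argue that any $0/1$ assignment satisfying Constraints~(\ref{Const1})--(\ref{Const5}) is an integral unit $s$–$d$ flow, which by flow decomposition splits into a single simple $s$–$d$ path together with some edge-disjoint cycles; since $w_1>0$ each cycle only increases $\sum x_{(i,j)}$, so every optimal assignment is cycle-free and selects exactly one simple path $p$. With $p$ fixed, Constraint~(\ref{Const6}) forces $T\ge 1/\tau_{(i,j)}$ on each chosen edge, and because $w_2\ge 0$ and we minimize, the optimal $T$ is tight at $T=\max_{(i,j)\in p} 1/\tau_{(i,j)} = 1/L(p)$, where $L(p):=\min_{(i,j)\in p}\tau_{(i,j)}$ is the bottleneck lifetime. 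Hence solving the LP is equivalent to minimizing $f(p)=w_1\,|p| + w_2/L(p)$ over all $s$–$d$ paths, and it remains to prove the algorithm returns a minimizer of $f$.

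Next I would fix notation for threshold subgraphs. Let $\ell_1>\ell_2>\dots>\ell_k$ be the distinct edge lifetimes and let $G_i$ keep exactly the edges of lifetime $\ge \ell_i$, so $G_1\subseteq\dots\subseteq G_k=G$ and the BFS shortest-path length $d_i$ in $G_i$ is non-increasing in $i$. Because the algorithm only ever deletes every remaining edge up to the current bottleneck, each pass operates on a threshold subgraph $G_{c_t}$ with $c_0=k$ and $c_0>c_1>\dots$; it runs BFS, reads the bottleneck of the returned path, and deletes all edges of lifetime $\le$ that bottleneck, so $c_{t+1}<c_t$. At least one edge leaves per pass, so the procedure halts within $|E|$ passes; the standing assumption gives a path in $G_k$, so the first pass succeeds and the algorithm outputs the path attaining the best objective it has seen.

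The heart of the argument is to show this best value equals $\min_p f(p)$. I would take an arbitrary optimal path $p^*$ of length $m^*$ and bottleneck $\ell^*=\ell_{i^*}$; since every edge of $p^*$ has lifetime $\ge\ell^*$, we have $p^*\subseteq G_{i^*}$. As the $c_t$ strictly decrease from $k$, there is a last pass $t$ with $c_t\ge i^*$, i.e. whose subgraph $G_{c_t}\supseteq G_{i^*}$ still contains $p^*$. At this pass BFS returns a path of length $\le m^*$, and I claim its bottleneck is $\ge\ell^*$: otherwise the ensuing deletion removes only edges of lifetime $<\ell^*$, leaving $p^*$ intact and forcing a nonempty next subgraph with $c_{t+1}\ge i^*$, contradicting the maximality of $t$ (resp. the stopping condition when no further pass exists). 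Hence the objective recorded at this pass is at most $w_1 m^* + w_2/\ell^* = f(p^*)$, and since $p^*$ is optimal no path does strictly better, so equality holds and the algorithm's best value is optimal.

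I expect the delicate points to be exactly two. The first is the lower bound on the bottleneck of the BFS output at the critical pass, since BFS only guarantees a shortest path and may break length-ties toward a low-lifetime edge; the argument above sidesteps this by using only the length bound ($\le m^*$) together with the deletion/stop condition, rather than assuming BFS returns any particular tie. The second is making the flow-decomposition reduction rigorous — confirming that Constraints~(\ref{Const1})--(\ref{Const5}) admit only a single $s$–$d$ path plus cycles and that $w_1>0$ suppresses the cycles — which is what licenses replacing the LP by the bottleneck objective $f$.
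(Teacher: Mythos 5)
Your proposal is correct, but it is a genuinely different (and considerably more rigorous) argument than the paper's. The paper proves the lemma by induction on iterations combined with a contradiction: if the final output were suboptimal, a better path must have been ``discarded,'' which cannot happen because BFS always returns a shortest path and only the lowest-lifetime edges are ever removed. That argument leaves implicit exactly the two points you isolate: it never reduces the Boolean LP to the bottleneck objective $f(p)=w_1|p|+w_2/L(p)$ (it simply assumes feasible solutions are simple paths and that $T$ is tight), and it never pins down \emph{at which} iteration the optimal path is still fully present and therefore certified by the BFS length bound. Your critical-pass argument --- locating the last threshold subgraph $G_{c_t}\supseteq G_{i^*}$ containing an optimal path $p^*$, bounding the BFS output's length by $|p^*|$ there, and forcing its bottleneck to be at least $\ell^*$ via the deletion/stopping rule --- supplies precisely the missing bookkeeping, and it is robust to the ambiguity in how the paper deletes edges when the objective does not improve. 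What the paper's induction buys is brevity and a statement that tracks the algorithm's actual loop structure; what your version buys is a complete proof that the best-seen objective equals $\min_p f(p)$, including the flow-decomposition step showing cycles are suppressed by $w_1>0$, which the paper omits entirely.
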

\begin{proof}
We use proof by induction to show the correctness of this lemma. The output of the first iteration of the algorithm is the optimal path until then. At the first iteration, the algorithm will return the shortest path that, in comparison with the previous outputs, it is the optimal one. Now, we have to show that if the algorithm stops at the $(n+1)^{th}$ iteration and the output path of the $n^{th}$ iteration was the optimal path until then, the $(n+1)^{th}$ iteration will return the optimal path. We use proof by contradiction to show it. Assume that the $(n+1)^{th}$ iteration will return a non-optimal path, a path with a lower objective value in comparison with the $n^{th}$ iteration, but still non-optimal. It means that there is a path with a shorter path length or a longer path lifetime, which was discarded when we removed the links with a lower lifetime in comparison with the shortest path in the $n^{th}$ iteration. On the one hand, since the BFS always returns the shortest possible path, there was no discarded path with a shorter path length.
On the other hand, since we just remove the links with shorter link lifetimes, the link with a longer lifetime could not be discarded. Hence, we meet a contradiction, and the lemma is proved. 
\end{proof}
\begin{lemma}
\label{lem::complexity}
The computational complexity of the proposed algorithm is on the order $O(|E||V|+|E|^2)$.
\end{lemma}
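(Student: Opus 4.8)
The plan is to decompose the running time into a one-time preprocessing cost (sorting the edge lifetimes) plus the cost of the main loop, and then to bound the number of loop iterations and the per-iteration work separately. First I would account for the initial sort of the lifetime matrix $\mathcal{T}$ into descending order, which takes $O(|E|\log|E|)$ time since $\mathcal{T}$ has $|E|$ relevant entries. I would then isolate the operations repeated inside the loop: a single breadth-first search, the evaluation of the objective on the returned path, and the identification and removal of the minimum-lifetime edges.

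Next I would cost a single pass of the loop body. A BFS on $G(V,E)$ runs in $O(|V|+|E|)$ time. Given the path $p$ returned by that BFS, evaluating the objective $w_1\sum x_{(i,j)}+w_2 T$ and locating the link $e_{(i',j')}$ of smallest lifetime on $p$ costs $O(|V|)$, since a simple path contains at most $|V|-1$ edges; deleting every edge with lifetime at most $\tau_{(i',j')}$ costs at most $O(|E|)$ per pass (and, using the sorted order, only $O(|E|)$ in total across the whole run, since each edge is deleted at most once). Hence a single iteration is $O(|V|+|E|)$.

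The crux of the argument is bounding the number of iterations, which I would handle with a short monotonicity argument. In each iteration the algorithm removes all edges whose lifetime does not exceed $\tau_{(i',j')}$, and this set always contains the edge $e_{(i',j')}$ itself, whose lifetime equals $\tau_{(i',j')}$. Therefore every iteration deletes at least one edge from the working graph, and no deleted edge is ever reinserted, so the loop can execute at most $|E|$ times before the termination condition (the BFS failing to find a path) is forced to trigger.

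Finally I would combine the pieces: the loop contributes $O(|E|)\cdot O(|V|+|E|)=O(|E||V|+|E|^2)$, while the preprocessing sort contributes $O(|E|\log|E|)$, which is dominated by the $O(|E|^2)$ term. The overall complexity is thus $O(|E||V|+|E|^2)$, as claimed. I expect the main obstacle to be precisely the iteration-count bound: one must argue carefully that each pass removes at least one edge and that removals are permanent, so that the number of BFS invocations cannot exceed $|E|$; everything else is routine accounting of standard subroutine costs.
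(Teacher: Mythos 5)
Your proposal is correct and follows essentially the same route as the paper: bound one BFS pass by $O(|V|+|E|)$ and the number of passes by $|E|$, giving $O(|E||V|+|E|^2)$. You go further than the paper by actually justifying the $|E|$ bound on iterations (each pass permanently deletes at least one edge, namely $e_{(i',j')}$ itself) and by accounting for the sorting and per-iteration bookkeeping costs, which the paper simply asserts or omits.
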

\begin{proof}
We know that the worst-case complexity of the BFS algorithm is $O(|V|+|E|)$.  
The proposed algorithm of this section, in the worst case, needs to do BFS $|E|$ times. Hence, the worst-case complexity of the proposed algorithm is $O(|E||V|+|E|^2)$. 
\end{proof}
It is worthy to mention that since in UAV networks usually $|E|>|V|$, the complexity of this algorithm could be considered as $O(|E|^2)$. Furthermore, although we proved that the worst-case complexity is $O(|E||V|+|E|^2)$, in practice the average-case complexity is much lower than the worst case, due to removing a large number of edges at the first iteration. 

\textbf{Link lifetime prediction:} While OPAR can generally work with any link lifetime prediction algorithm, we utilize three-dimensional geometry to predict the link lifetime by knowing the UAV nodes' positions. 
We then form a $n\times n$ matrix $\mathcal{T}$ which contains the LLT prediction for each link $e_{(i,j)}$, i.e. $\tau_{(i,j)}$. 
Considering a three-dimensional sphere, the position of node $i$ at time $t_0$ is defined by the tuple $p_i(t_0)=(x_i,y_i,z_i)_{t_0}$. Accordingly, the direction of each node in a three-dimensional sphere is defined by two angles azimuthal $\alpha$ and polar $\theta$. Having two consecutive positions of the same UAV node, $p_i(t_1)$ and $p_i(t_2)$, these angles could be calculated by Equation (\ref{eq::angels}). To calculate the velocity of node $i$ having its position in two consecutive times $t_1$ and $t_2$, Equation (\ref{eq::3dSpeed}) has to be used.
\begin{equation}
\label{eq::angels}
\begin{small}
\begin{cases}
\alpha_i&=tan^{-1}({\frac{y_i(t_2)-y_i(t_1)}{x_i(t_2)-x_i(t_1)}})\\
\theta_i&=cos^{-1}({\frac{z_i(t_2)-z_i(t_1)}{\sqrt{[x_i(t_2)-x_i(t_1)]^2 + [y_i(t_2)-y_i(t_1)]^2 + [z_i(t_2)-z_i(t_1)]^2}}})\\
&=tan^{-1}({\frac{\sqrt{[x_i(t_2)-x_i(t_1)]^2 + [y_i(t_2)-y_i(t_1)]^2 }}{z_i(t_2)-z_i(t_1)}})
\end{cases}
\end{small}
\end{equation}
\begin{align}
\label{eq::3dSpeed}
&v_i= \\\nonumber
&\frac{\sqrt{[x_i(t_2)-x_i(t_1)]^2 + [y_i(t_2)-y_i(t_1)]^2 + [z_i(t_2)-z_i(t_1)]^2}}{t_2-t_1}
\end{align}

To consider the acceleration of the node, we need three consecutive positions of the nodes, to calculate node speed in two consecutive times, i.e. $v_{i}(t_1)$ and $v_{i}(t_2)$, using Equation (\ref{eq::3dSpeed}). Thereby, the acceleration could be easily calculated using equation (\ref{eq::acc}). Accordingly, to calculate the new node position after time $\Delta t$, equation (\ref{eq::3dNextPosAcc}) could be used.
\begin{equation}
\label{eq::acc}
a_i=\frac{v_i(t_2)-v_i(t_1)}{t_2-t_0}
\end{equation}    
\begin{align}
\label{eq::3dNextPosAcc}
&p_i(t_2+\Delta t)=\\\nonumber
&\begin{cases}
x_i(t_2+\Delta t)=x_i(t_2)+( v_i(t_2) \Delta t+\frac{1}{2}a_i \Delta t^2 ) sin(\theta_i) cos(\alpha_i)\\
y_i(t_2+\Delta t)=y_i(t_2)+(  v_i(t_2)\Delta t+\frac{1}{2}a_i \Delta t^2 ) sin(\theta_i) sin(\alpha_i)\\
z_i(t_2+\Delta t)=z_i(t_2)+( v_i(t_2)\Delta t+\frac{1}{2}a_i \Delta t^2 ) cos(\theta_i) 
\end{cases}
\end{align} 

The euclidean distance between UAV $i$ and UAV $j$ at time $t_2+\Delta t$ could be calculated using Equation (\ref{eq::distance}). 
Let us assume, without loss of generality, the transmission range of all UAV nodes is the constant $R$. For each pair of nodes, if the distance is greater than $R$, we consider the LLT equal to zero. Now, our goal is to find the value of the maximum $(\tau_{(i,j)}=\Delta t)$ that keeps the $d_{(i,j)}(t_2+\Delta t)$ less than $R$ for all the remaining links. Hence, we have to find the maximum positive root of Equation (\ref{eq::root}).  The positions of nodes $i$ and $j$ in this equation for the time $(t_2+\Delta t)$ could be calculated using Equation (\ref{eq::3dNextPosAcc}). In Equation (\ref{eq::root}), all the values except $\Delta t$ are known. This equation has two roots, a positive and a negative root. Any low complexity numerical methods such as the Bisection method could be used to find the positive root. 
\begin{equation}
\label{eq::distance}
d_{(i,j)}(t_2+\Delta t)=\\
\sqrt{ \begin{aligned} 
&(x_i(t_2+\Delta t)-x_j(t_2+\Delta t))^2+\\
&(y_i(t_2+\Delta t)-y_j(t_2+\Delta t))^2+\\
&(z_i(t_2+\Delta t)-z_j(t_2+\Delta t))^2
\end{aligned}
}
\end{equation}
\begin{equation}
\label{eq::root}
d_{(i,j)}(t_2+\Delta t)-R=0
\end{equation}

\section{Performance Evaluation}
\label{sec::evaluation}

To show the superiority of OPAR, we evaluate its performance and compare it with three well-known routing protocols, AODV, OLSR, and DSDV. We use a wide range of routing protocols, including both distance vector and link-state techniques, as well as reactive and proactive protocols to show the superiority of OPAR in comparison with different routing protocols. We use network simulator ns-3 to implement OPAR and make the comparison. We measure flow success rate, routing traffic overhead, network throughput, and flow completion time (FCT) as performance evaluation metrics. The UAV nodes move randomly according to the 3D random waypoint (RWP) model, as well as the 3D Gauss-Markov (G-M) model. While RWP generates completely random movements in the area, the Gauss-Markov is a memory-based model and prevents the UAV from significantly changing its angle of movement. Table (\ref{tbl::simulationSetting}) represents the summary of simulation setting.
\begin{table}[t]
\caption[]{Simulation Setting }
\resizebox{1\textwidth}{!}{
\begin{minipage}{\textwidth}
\begin{tabular}{ l | l  }
  Number of UAVs & $[50\quad 100]$ \\
  Area size & $300\times 2000\times 50 m$\\
  Transmission power & 7.5 dBm\\
  Number of concurrent flows & $[1 \quad 10]$\\
  File size & 5 MB\\
  Simulation time & 500 sec\\
  Speed range & $[0 \quad 50] m/s$\\ 
  Mobility models & 3D RWP and 3D G-M\\
  Azimuthal range in G-M model & $[0 \quad 0.05]$ rad\\
  Routing protocols& OPAR, AODV, DSDV, OLSR\\
  Traffic type & TCP NewReno\\
  Wireless communication standard & IEEE 802.11b\\
  Propagation loss model & Free-space propagation loss\\ 
  Propagation delay model & Constant speed propagation delay
\end{tabular}
\label{tbl::simulationSetting}
\end{minipage}}
\end{table}

In each communication flow, a file of $5$ $MB$ is sent from the source UAV to the destination. We simulate each instance for 500 seconds and consider the flow status as a failure if the network fails at delivering the total file at the given simulation time. Each simulation instance has been simulated 10 times, with different randomized seed, and the average values of the results are reported. In the OPAR algorithm, each UAV records its own location every 0.3 sec and sends the three consequent locations via a message to the ground station. 

We first analyze the OPAR performance for different network loads and densities by varying the path length-weight, i.e. $w_1$. The aim is to find the effect of path length versus the effect of path lifetime in different loads and densities. Fig. (\ref{fig::wthroughput}) shows the network throughput for a different number of flows and different number of UAVs. For better representation, we show the results only for a network with one, five, and ten concurrent flows and for the network with 50, 75, and 100 UAV nodes. Network throughput is calculated as the ratio of successful delivery over the network bandwidth over time and measured in Mbps. It is worth mentioning that if there does not exist a path between the source and the destination nodes, the source node stops sending. As an instance, in a network with ten concurrent flows, if two of the destination nodes are physically unreachable, the corresponding source nodes stop sending their packets which leads to mistakenly reporting the throughput for the network with eight flows, instead of the network with ten concurrent flows. To report fair throughput results, the throughput has to be considered in accordance with the flow failure rate. Hereupon, we weighted the throughput with the corresponding failure rate. 

As Fig. (\ref{fig::wthroughput}a) shows, in the simulated settings, $w_1=0.6$ for the network with the lowest load leads to the highest throughput. This value of $w_1$ for the network with the highest load is between 0.8 and 0.9. Fig. (\ref{fig::wthroughput}b) shows that in the dense network, the weight close to one shows the highest throughput where this value for the network with medium density and low density is around 0.5 and 0.7, respectively. As it is clear, the optimal point between the path length weight and the path lifetime differs based on the network setting. However, the shortest path, in most cases, is not the ideal solution. 

\begin{figure}[t!]
	\centering
	\subfloat[Different number of flows.]{\includegraphics[width=.5\linewidth]{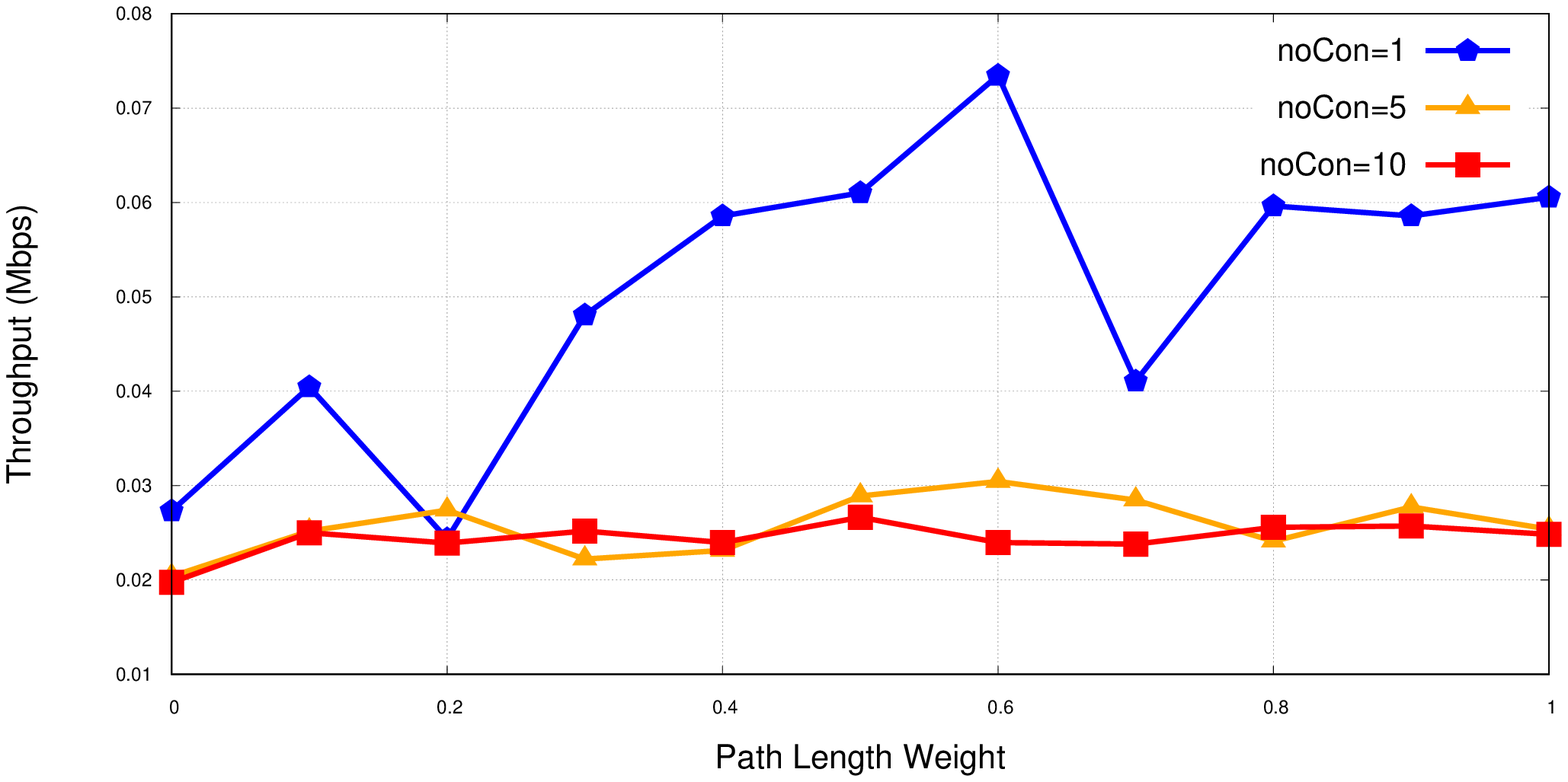}}
	\subfloat[Different number of UAVs.]{  \includegraphics[width=.5\linewidth]{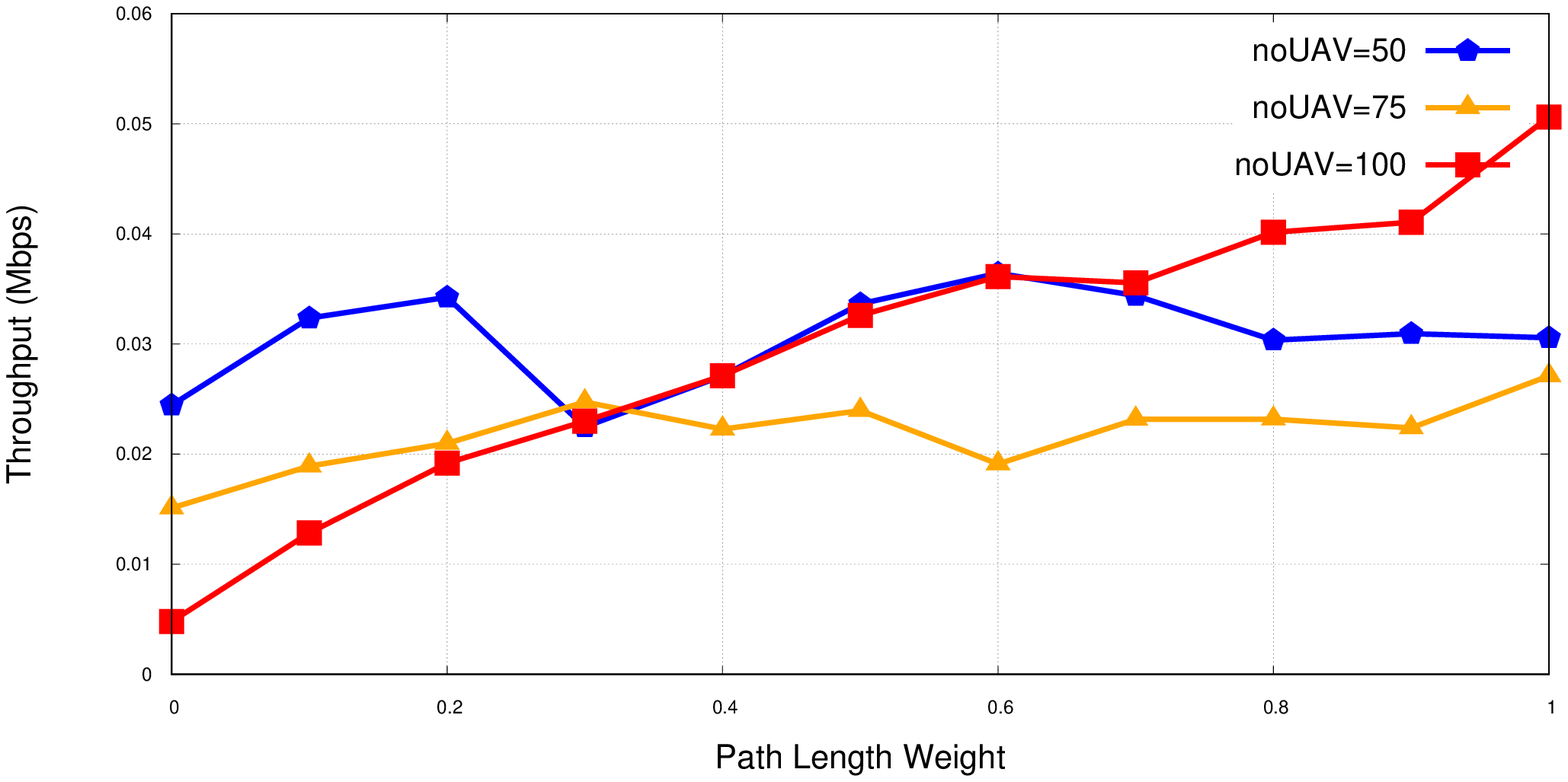}}
	\caption{A comparison of the throughout for different loads and densities.	
	}
	\label{fig::wthroughput}
\end{figure}
\begin{figure}[t!]
	\centering
	\subfloat[Different number of flows.]{\includegraphics[width=.5\linewidth]{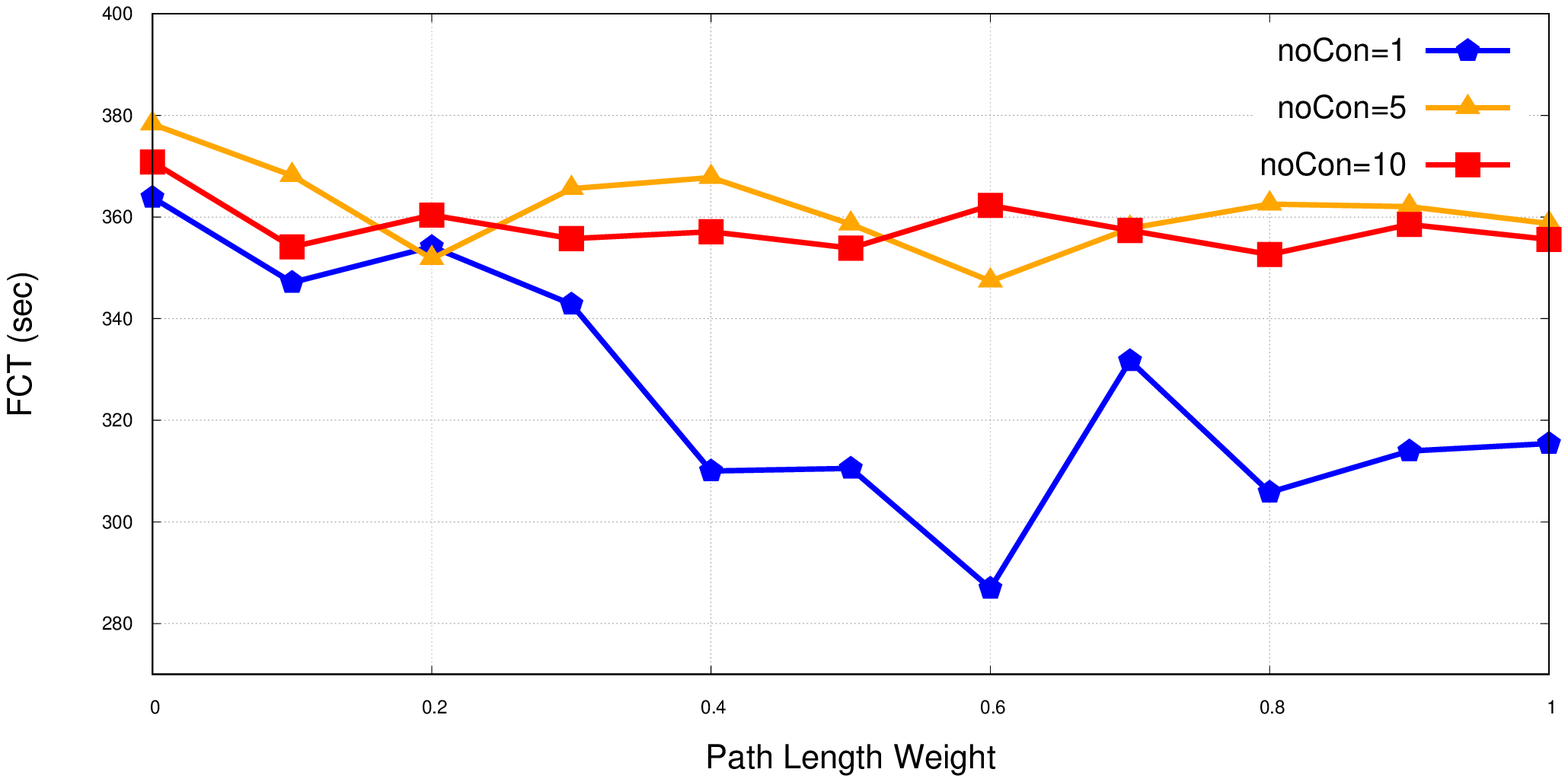}}
	\subfloat[Different number of UAVs.]{  \includegraphics[width=.5\linewidth]{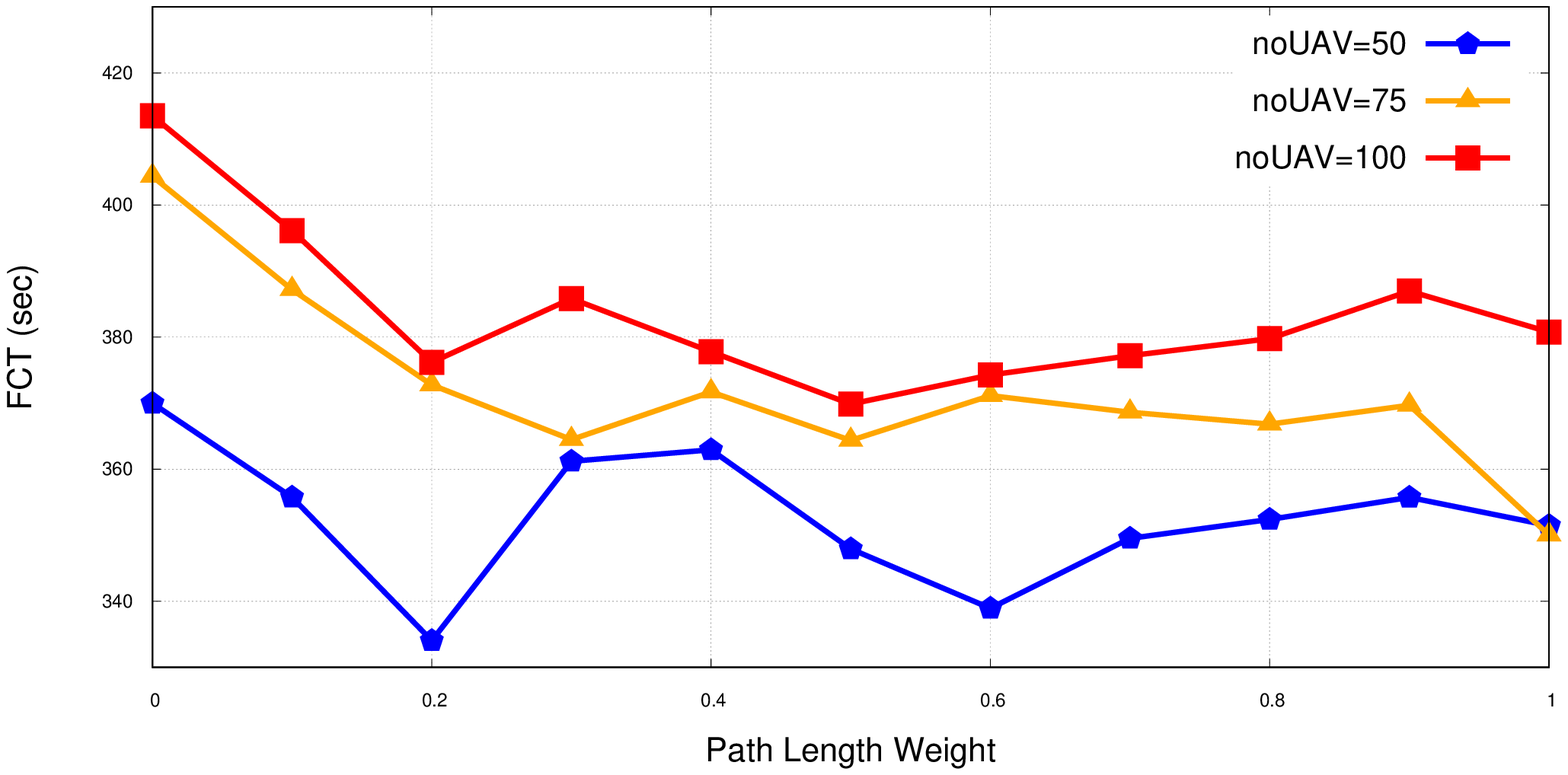}}
	\caption{A comparison of FCT for different loads and densities.}
	\label{fig::wFCT}
\end{figure}
\begin{figure}[ht!]
	\centering
	\subfloat[Different number of  flows (RWP)]{\includegraphics[width=.5\linewidth]{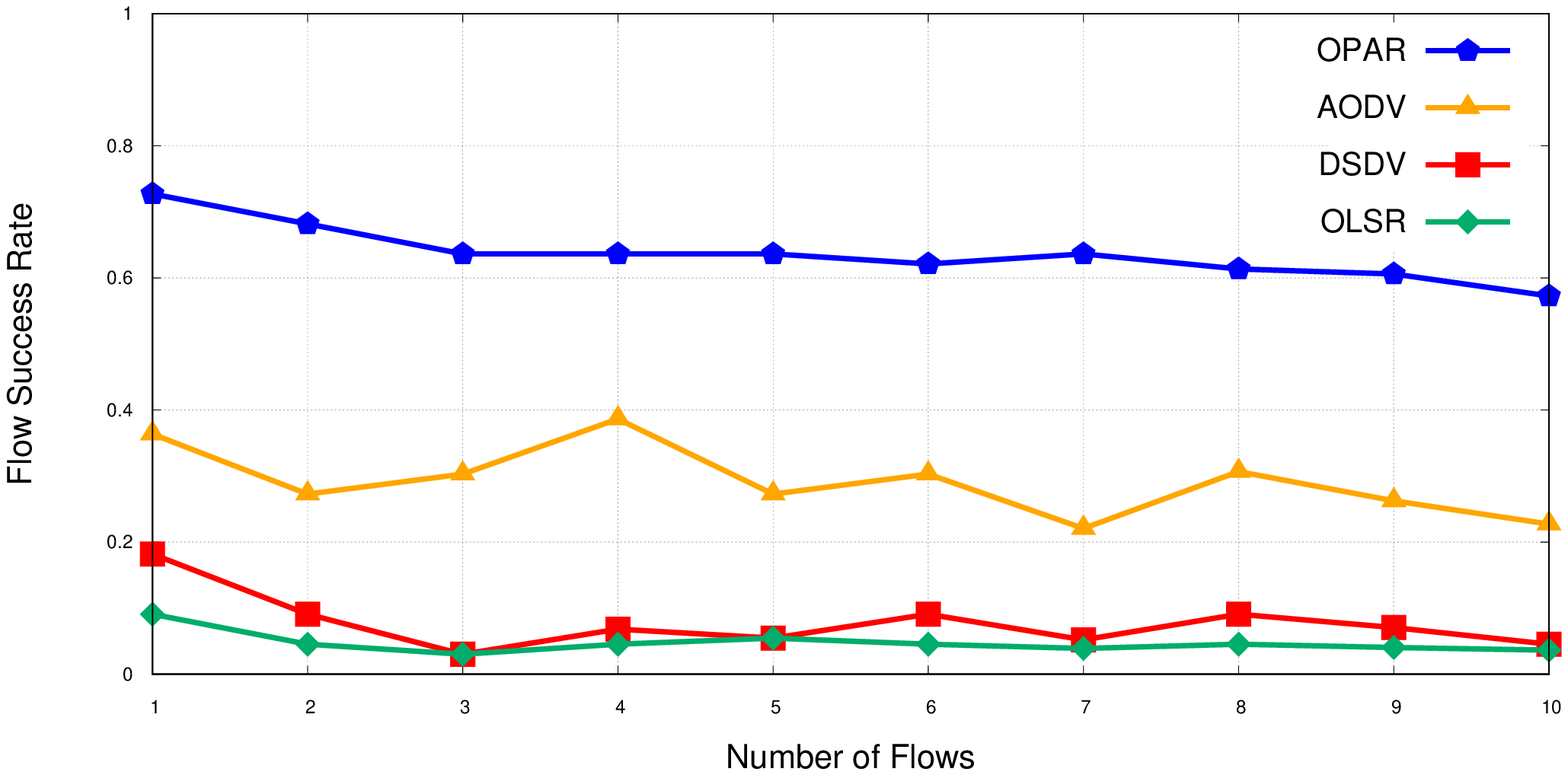}}
	\subfloat[Different number of  flows (G-M)]{\includegraphics[width=.5\linewidth]{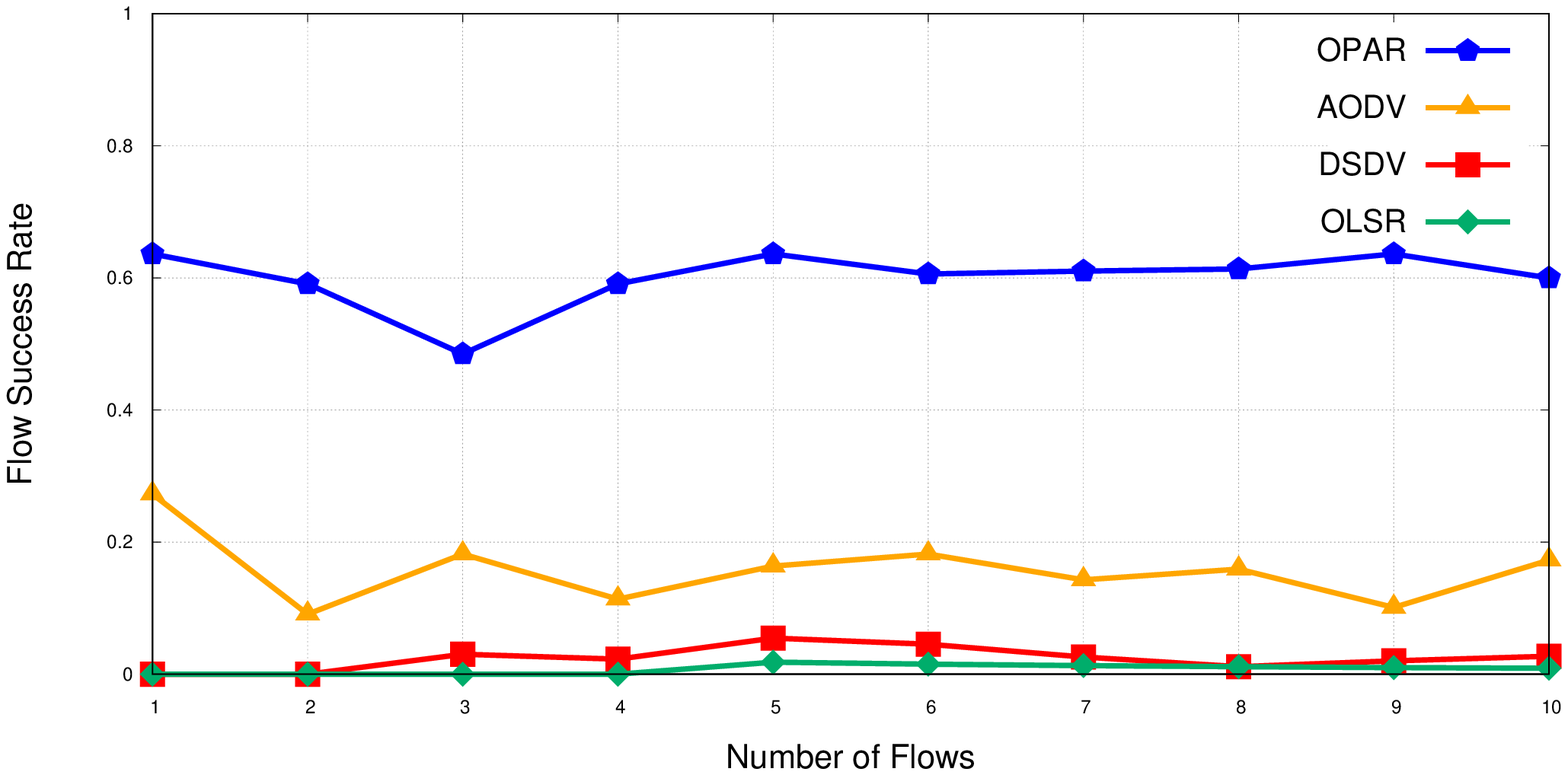}}\\
	\subfloat[Different number of UAVs (RWP)]{  \includegraphics[width=.5\linewidth]{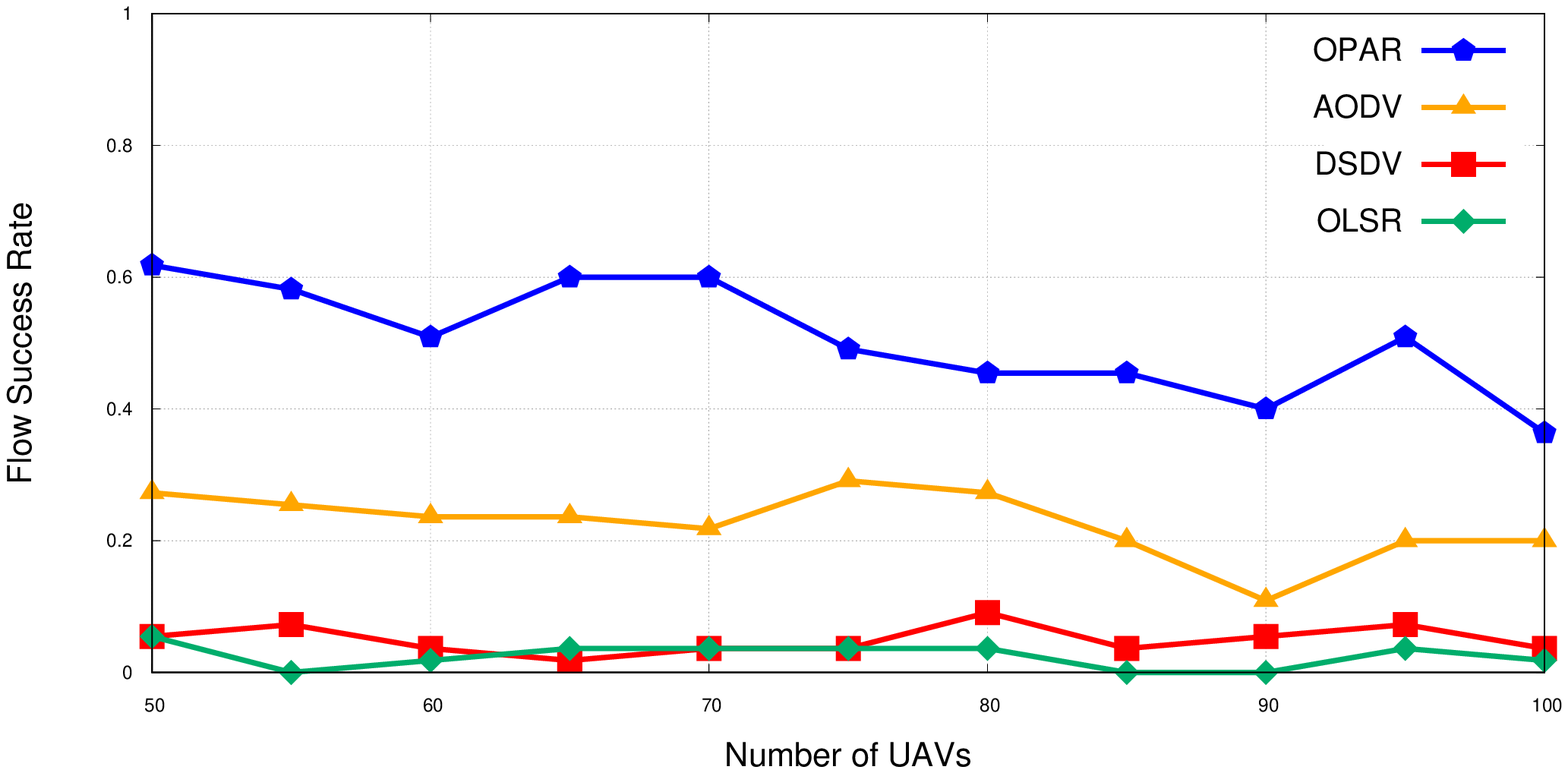}}
	\subfloat[Different number of UAVs (G-M)]{  \includegraphics[width=.5\linewidth]{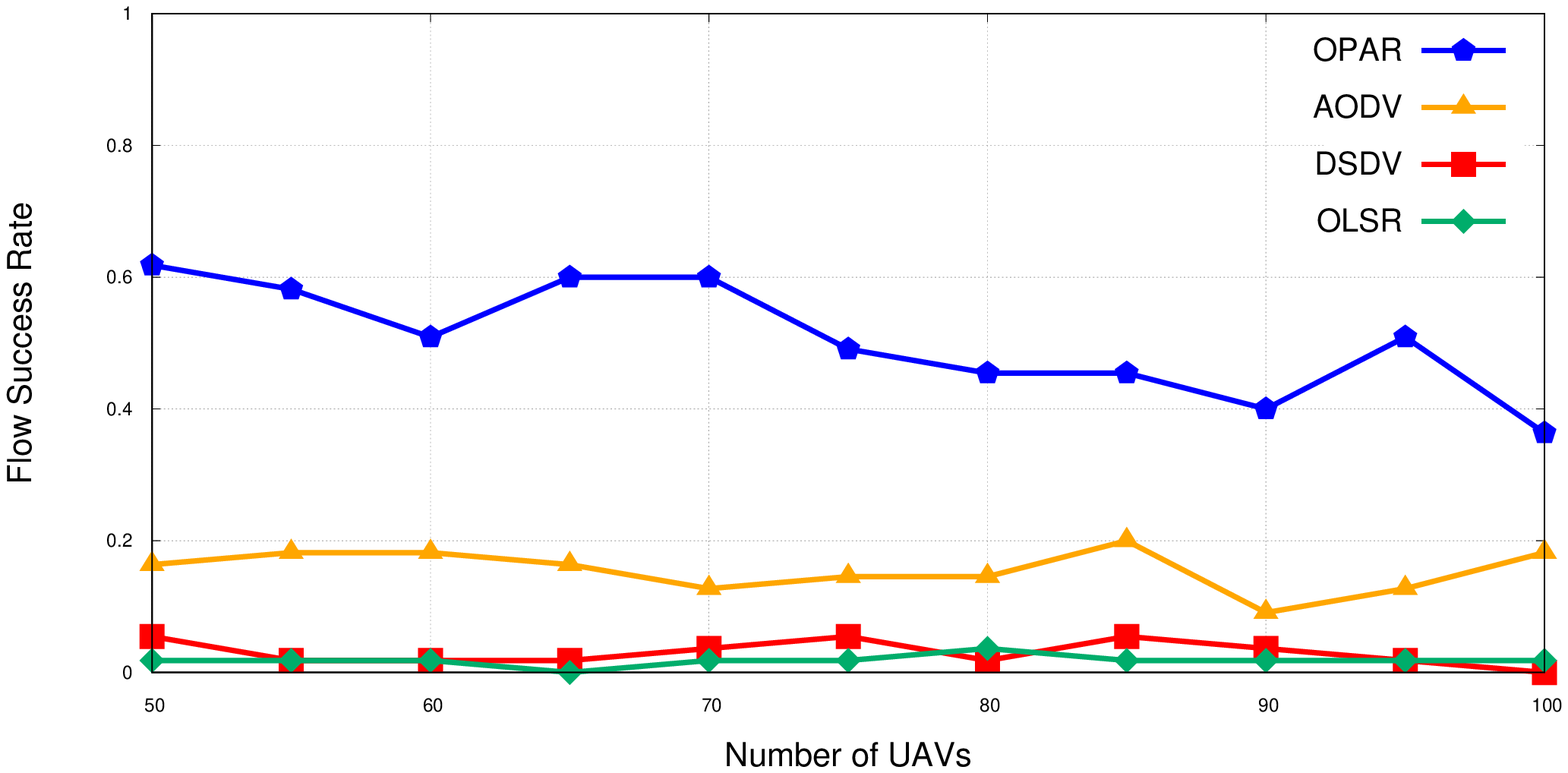}}
	\caption{A comparison of the flow success rate for different loads and densities.}
	\label{fig::success}
\end{figure} 

Next, we investigate the effect of varying the path length-weight on the FCT, in a network with different loads and different densities. We measure this metric in seconds and calculate it as the average FCT of flows in each simulation scenario. We consider the simulation time, i.e., 500 sec, as the lower bound for the flows which failed to deliver their files in the simulation time. Fig. (\ref{fig::wFCT}) shows the results. In accordance with the results of Fig. (\ref{fig::wthroughput}), for the network with each specific load and specific density, a certain value of the path length-weight leads to the highest performance. By comparing Fig. (\ref{fig::wFCT}) with Fig. (\ref{fig::wthroughput}), we find that, in most cases, the same values for the path length-weight that lead to higher throughput, also lead to lower FCT and consequently better overall performance. While not shown here, the OPAR success rate is not affected seriously by increasing the load and the density of the network, which shows its stability and reliability. 

Now, we compare the performance of OPAR with that of AODV, OLSR, and DSDV for both RWP and G-M mobility models. Fig. (\ref{fig::success}) compares the results of the flow success rate. We note that the overhead of conventional routing algorithms prevents them from successfully delivering their files, in highly dynamic networks. As Fig. (\ref{fig::success}) shows, the flow success rate of OPAR is on average $30\%$ higher than that of AODV and much higher than other routing protocols. This figure shows that the flow success rate decreases slightly by increasing the network load and network density. It also shows that the flow success rate in the scenarios under the RWP mobility model is higher than that of G-M. The main reason is that the randomness of the RWP model increases the chance of connectivity. 

Fig. (\ref{fig::routing}) shows the routing traffic in MB generated by different routing algorithms. For OPAR, since the routing operation is performed by the ground station, and the communication of UAVs with the ground station is on a channel other than the UAVs communicating channel, it might not be fair to compare its routing overhead with that of conventional routing algorithms. However, it is worth noting that the average OPAR routing overhead is only $0.83$ MB for a network with 50 UAVs which is less than $10\%$ of AODV routing overhead. This figure shows that OLSR has the highest routing traffic overhead, where AODV shows a better performance than DSDV. It shows that by increasing the network load, the routing traffic overhead increases slightly. However, increasing the number of nodes increases the routing traffic exponentially, even for a moderate network load. In this figure, we see how much the highly dynamic nature of the network plays a role in the dramatic increment of routing overhead, especially for OLSR. Furthermore, the routing traffic generated by conventional routing algorithms in the scenarios under the RWP mobility model is significantly less than that of G-M, owing to the higher network connectivity under RWP mobility.

\begin{figure}[t!]
	\centering
	\subfloat[Different number of  flows (RWP)]{\includegraphics[width=.5\linewidth]{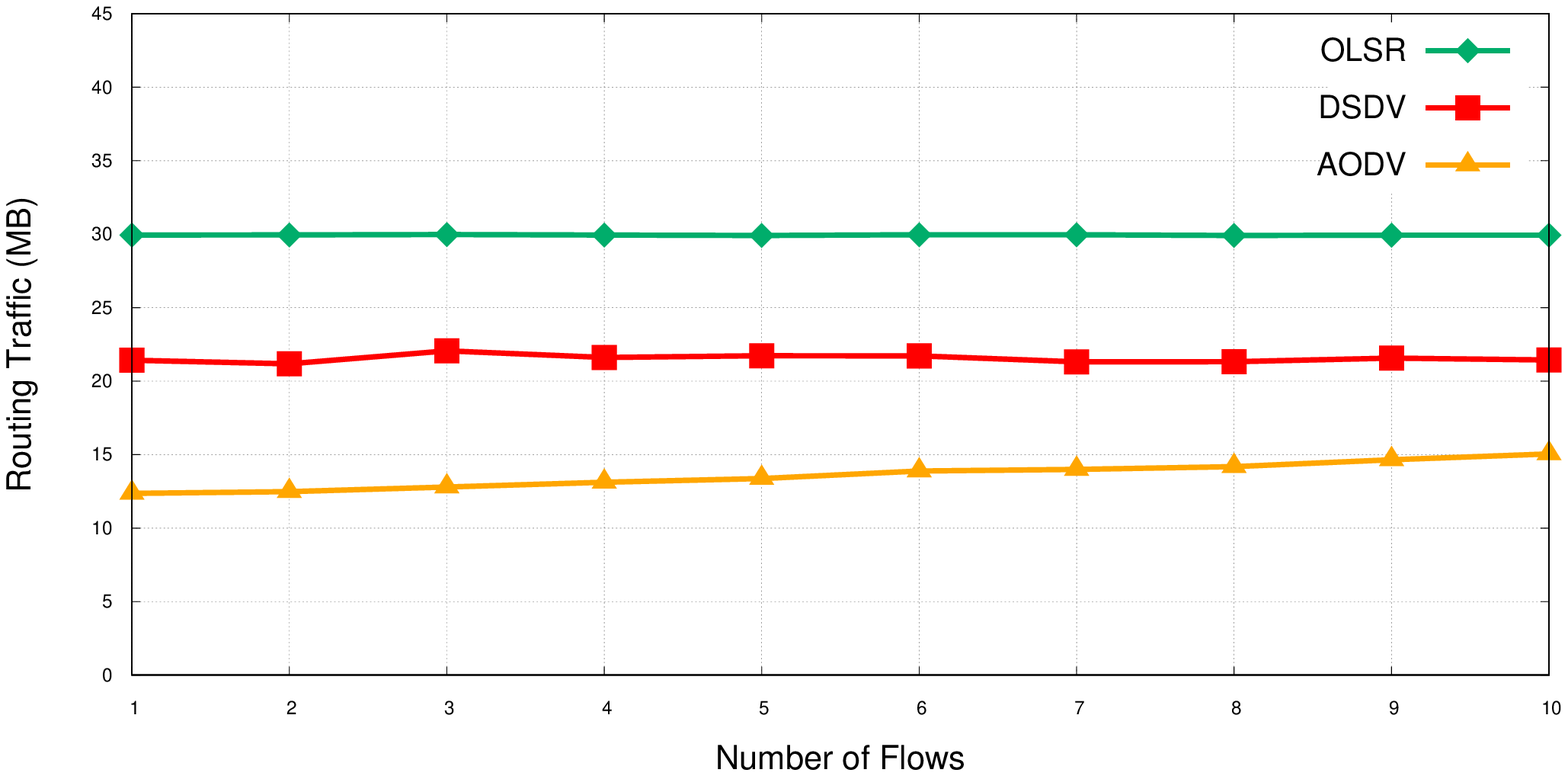}}
	\subfloat[Different number of  flows (G-M)]{\includegraphics[width=.5\linewidth]{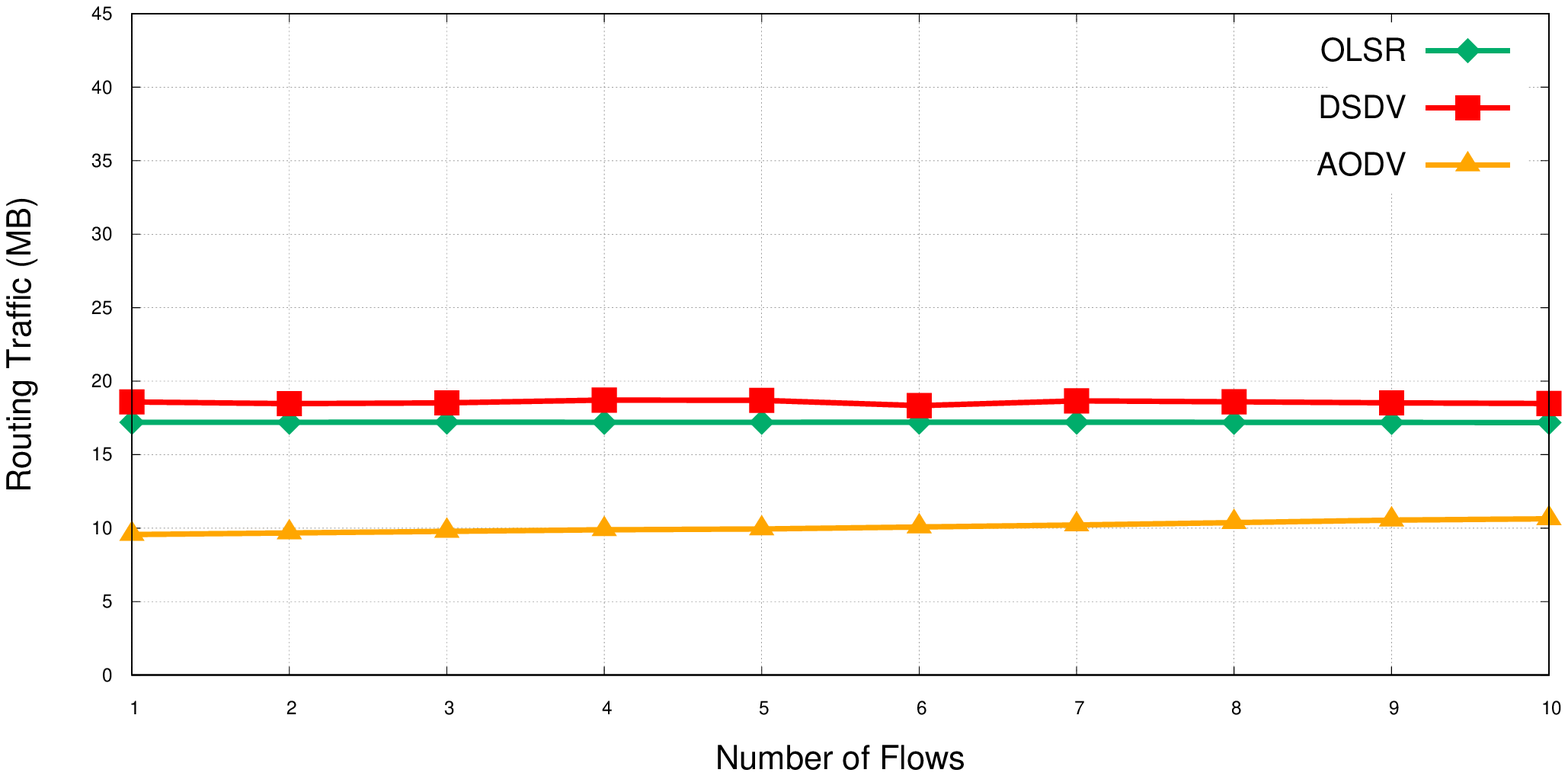}}\\
	\subfloat[Different number of UAVs (RWP)]{  \includegraphics[width=.5\linewidth]{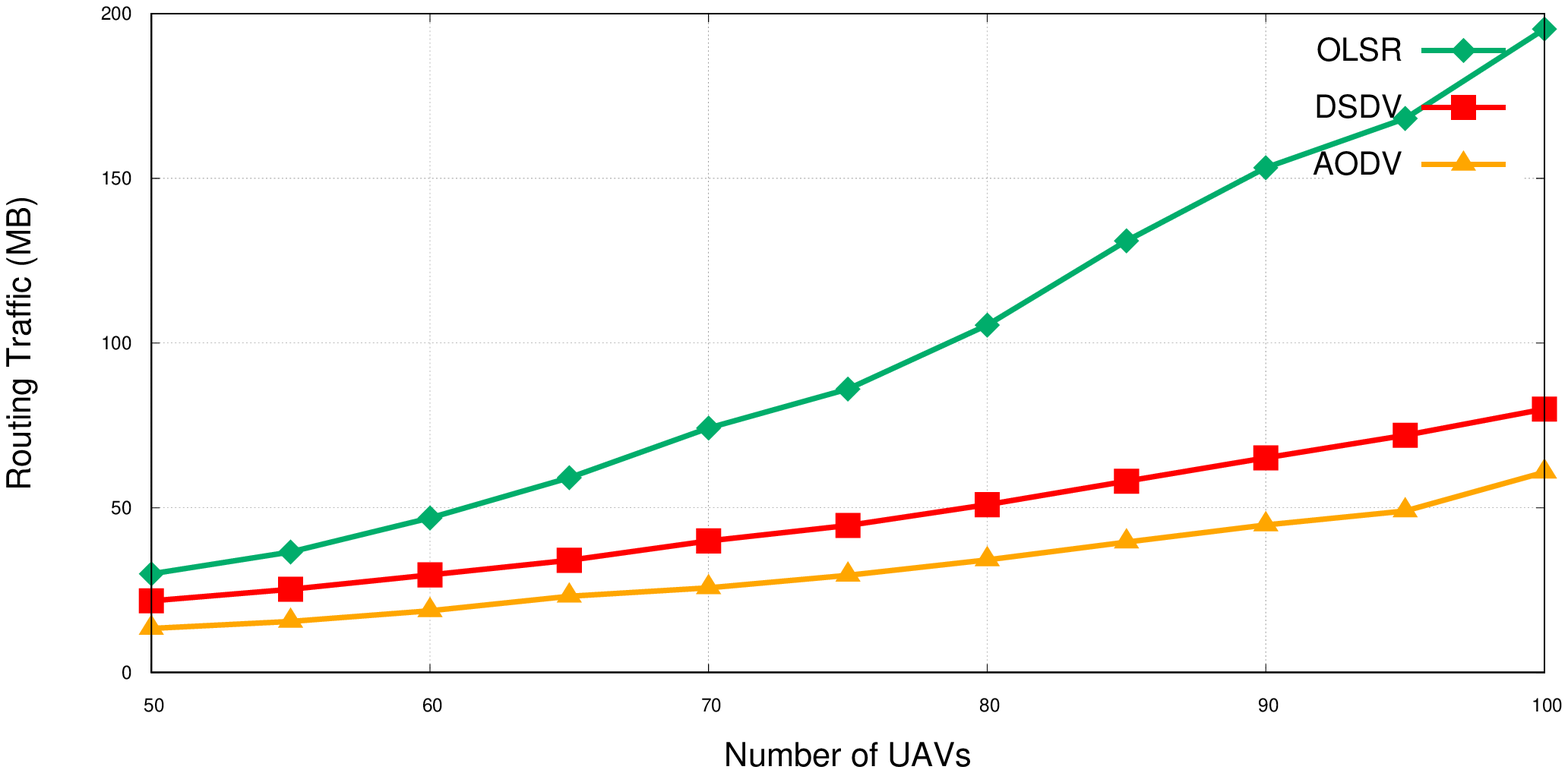}}
	\subfloat[Different number of UAVs (G-M)]{  \includegraphics[width=.5\linewidth]{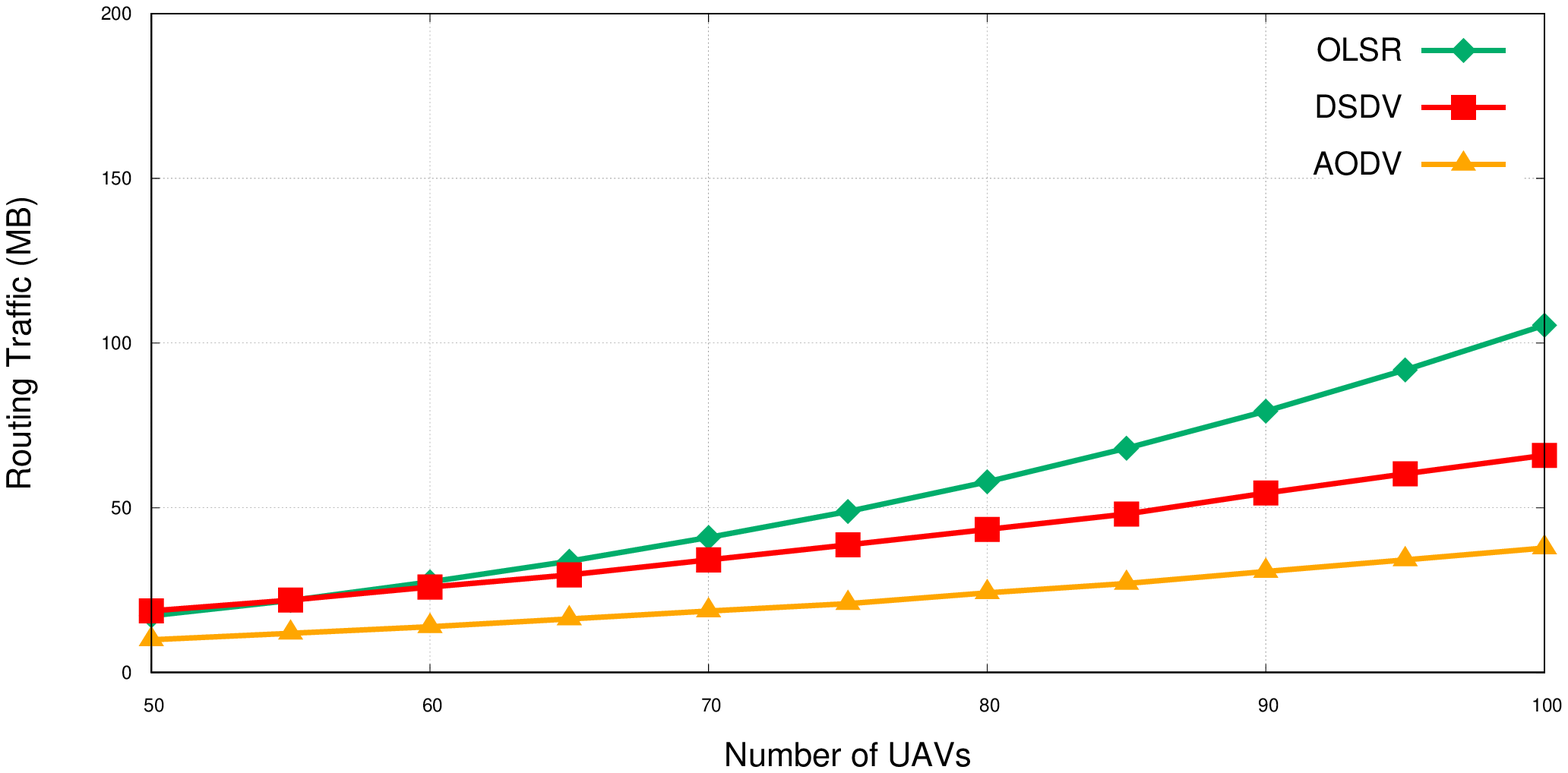}}
	\caption{A comparison of routing overhead for different loads and densities.}
	\label{fig::routing}
\end{figure}

Fig. (\ref{fig::throughput}) shows the comparison of network throughput for different network loads and densities. As it is shown in Fig. (\ref{fig::throughput}), on average, OPAR gains $25\%$ higher throughput in comparison with AODV, which outperforms DSDV and OLSR. This Figure shows that the increment in network load affects the throughput negatively due to a higher congestion rate. It further shows that the increment in the number of nodes degrades the throughput of the network. We observed that this performance degradation has two main reasons. First, by increasing the density of the network, the routing traffic of conventional routing algorithms increases dramatically. Second, by increasing the network density some paths with longer lengths are appeared and considered for data transfer. The longer paths generally increase congestion and lead to performance degradation. Comparing Fig. (\ref{fig::throughput}a) with Fig. (\ref{fig::throughput}b) and also Fig. (\ref{fig::throughput}c) with Fig. (\ref{fig::throughput}d), we can conclude that using longer paths has more of a negative effect than that of routing overhead. In the presence of the G-M model, the routing overhead is less than that of RWP which is supposed to improve the throughput. However, in the same case, some longer paths are used due to the lower connectivity in the network which degrades the throughput. We see that the throughput of the network under the G-M mobility model is lower than that of RWP. However, OPAR shows stable throughput for both G-M and RWP models.            

\begin{figure}[t!]
	\centering
	\subfloat[Different number of  flows (RWP)]{\includegraphics[width=.5\linewidth]{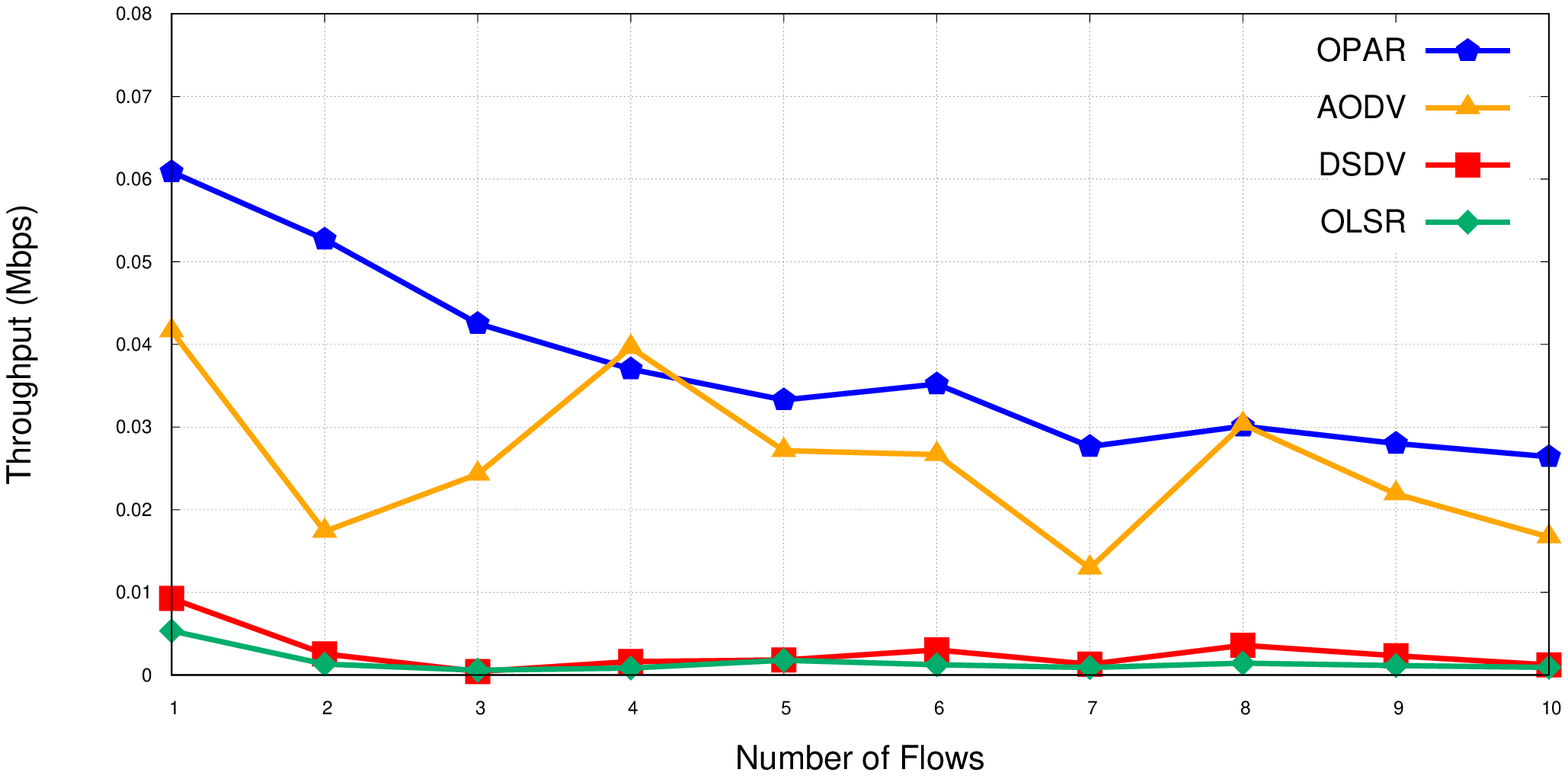}}
	\subfloat[Different number of flows (G-M)]{\includegraphics[width=.5\linewidth]{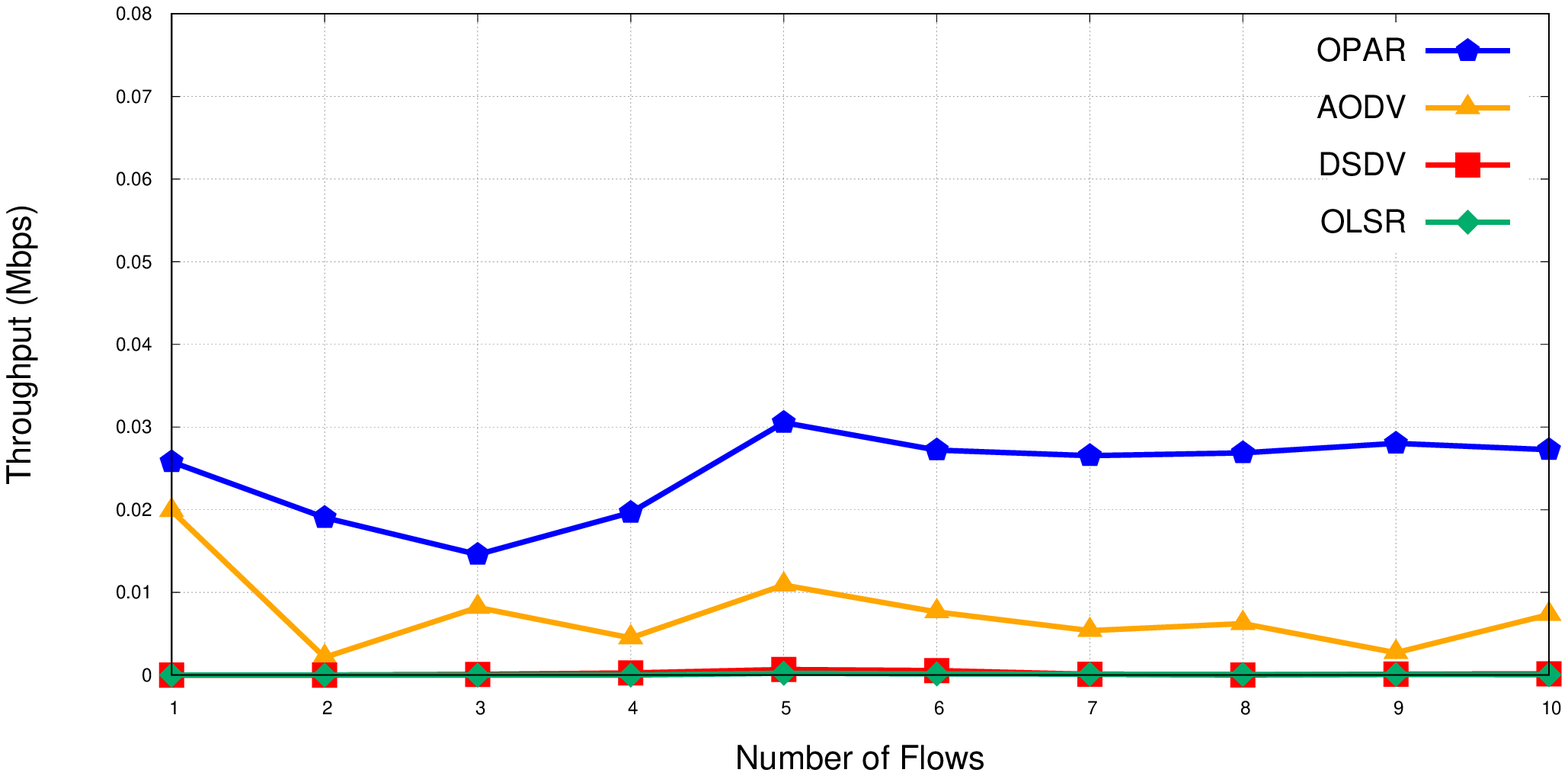}}\\
	\subfloat[Different number of UAVs (RWP)]{  \includegraphics[width=.5\linewidth]{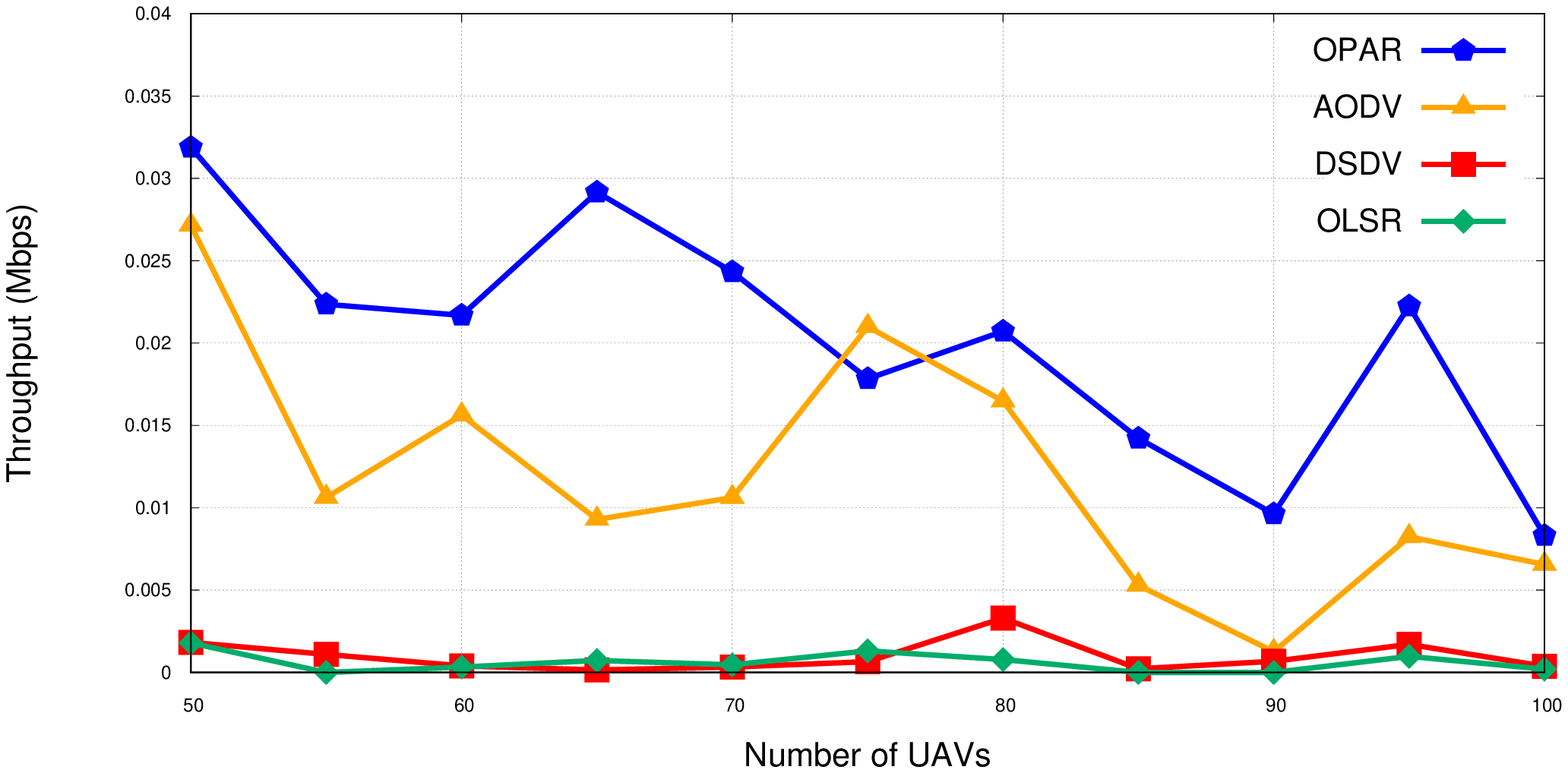}}
	\subfloat[Different number of UAVs (G-M)]{  \includegraphics[width=.5\linewidth]{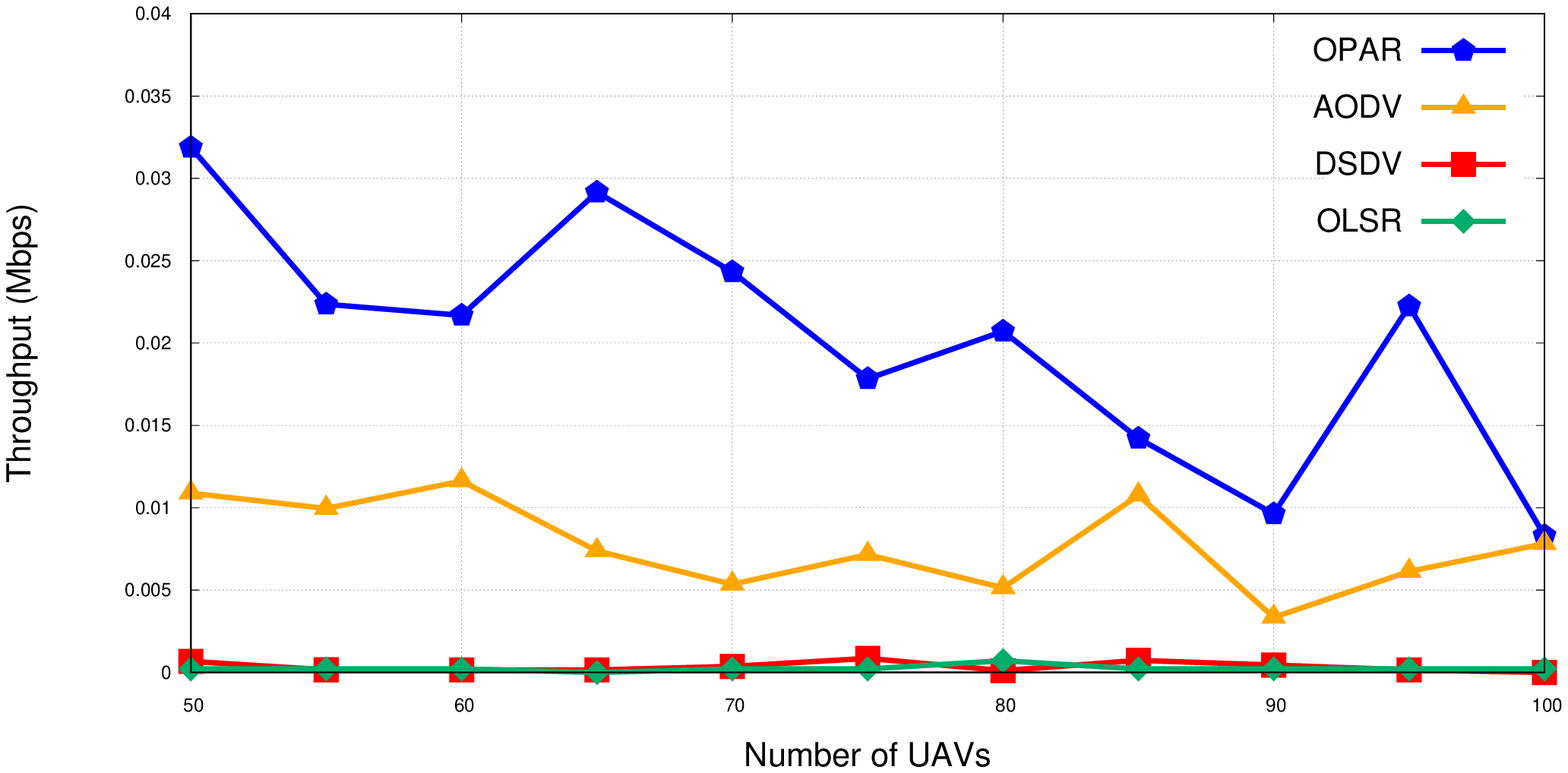}}
	\caption{A comparison of the throughout for different loads and densities.}
	\label{fig::throughput}
\end{figure}

Finally, we show the results of FCT in Fig. (\ref{fig::FCT}). This figure shows that OPAR needs $15\%$ less time, on average, in comparison with AODV, and around $50\%$ less in comparison with DSDV and OLSR. While the increase in the network load does not show a significant increase in FCT, increasing the network density significantly increases the FCT. The negative effect of using the G-M model is also obvious where the number of concurrent flows is increased which arises from the increment in the number of failed flows in the G-M model. 

\begin{figure}[t!]
	\centering
	\subfloat[Different number of  flows (RWP)]{\includegraphics[width=.5\linewidth]{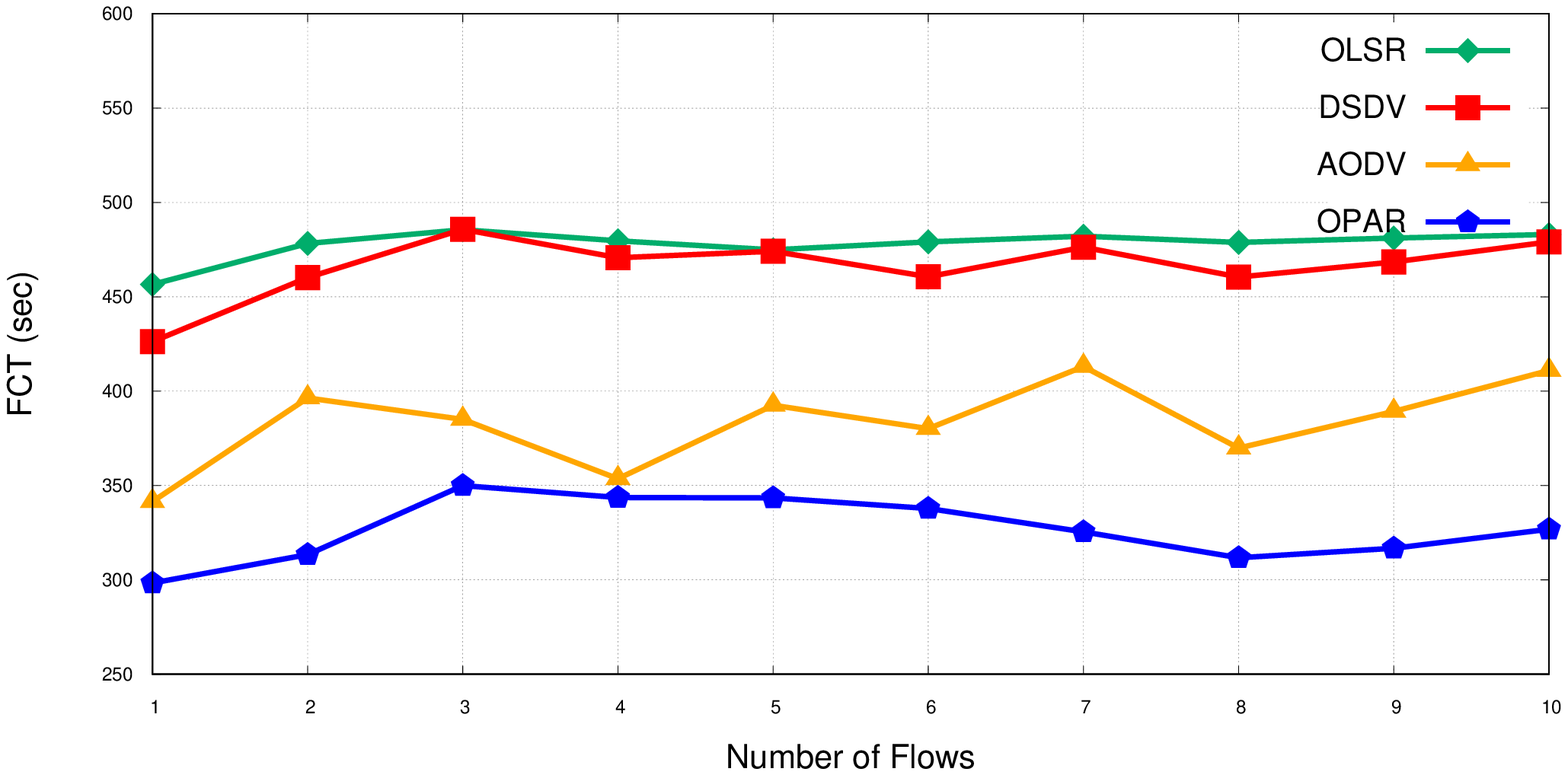}}
	\subfloat[Different number of flows (G-M)]{\includegraphics[width=.5\linewidth]{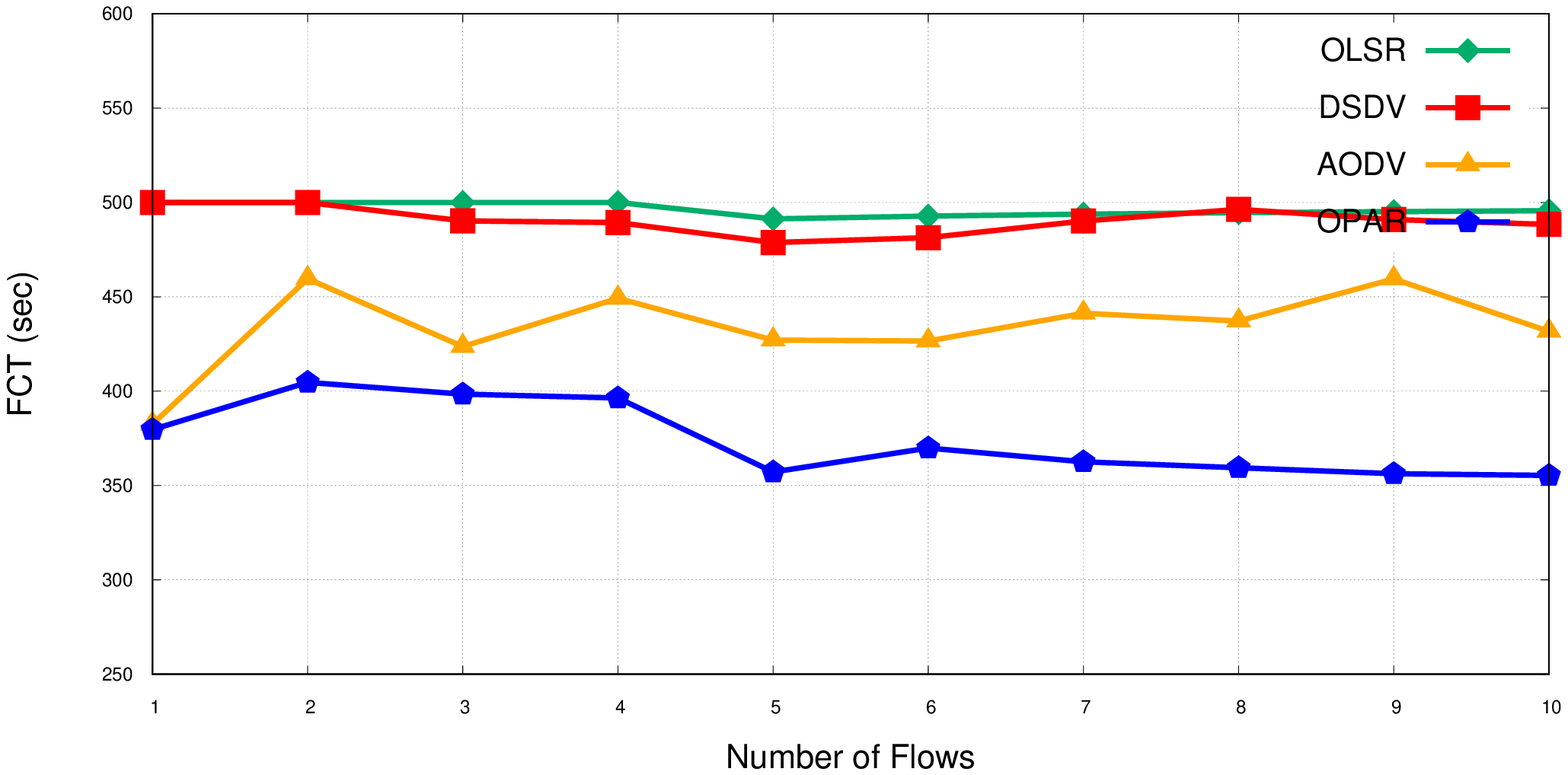}}\\
	\subfloat[Different number of UAVs (RWP)]{  \includegraphics[width=.5\linewidth]{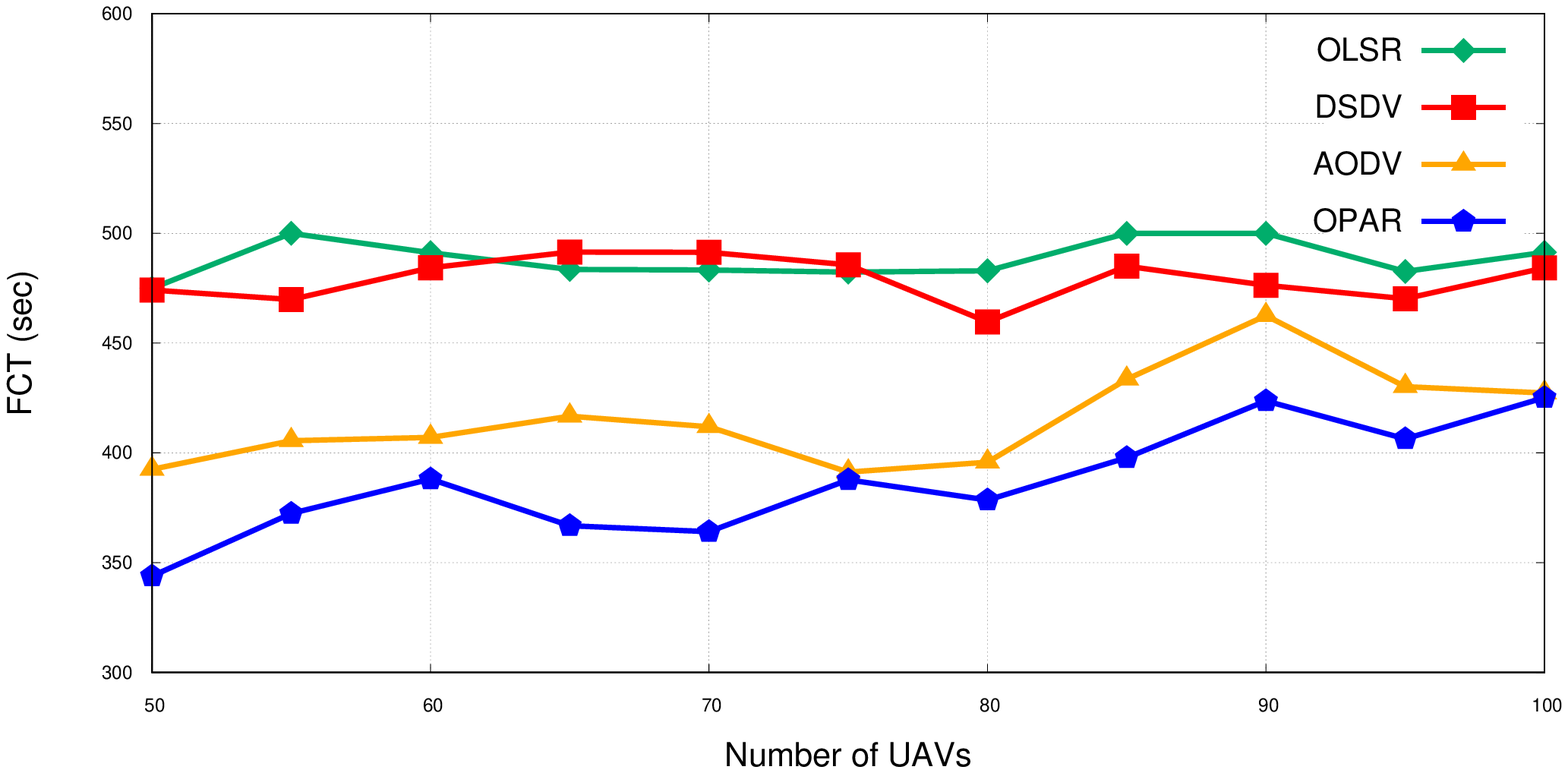}}
	\subfloat[Different number of UAVs (G-M)]{  \includegraphics[width=.5\linewidth]{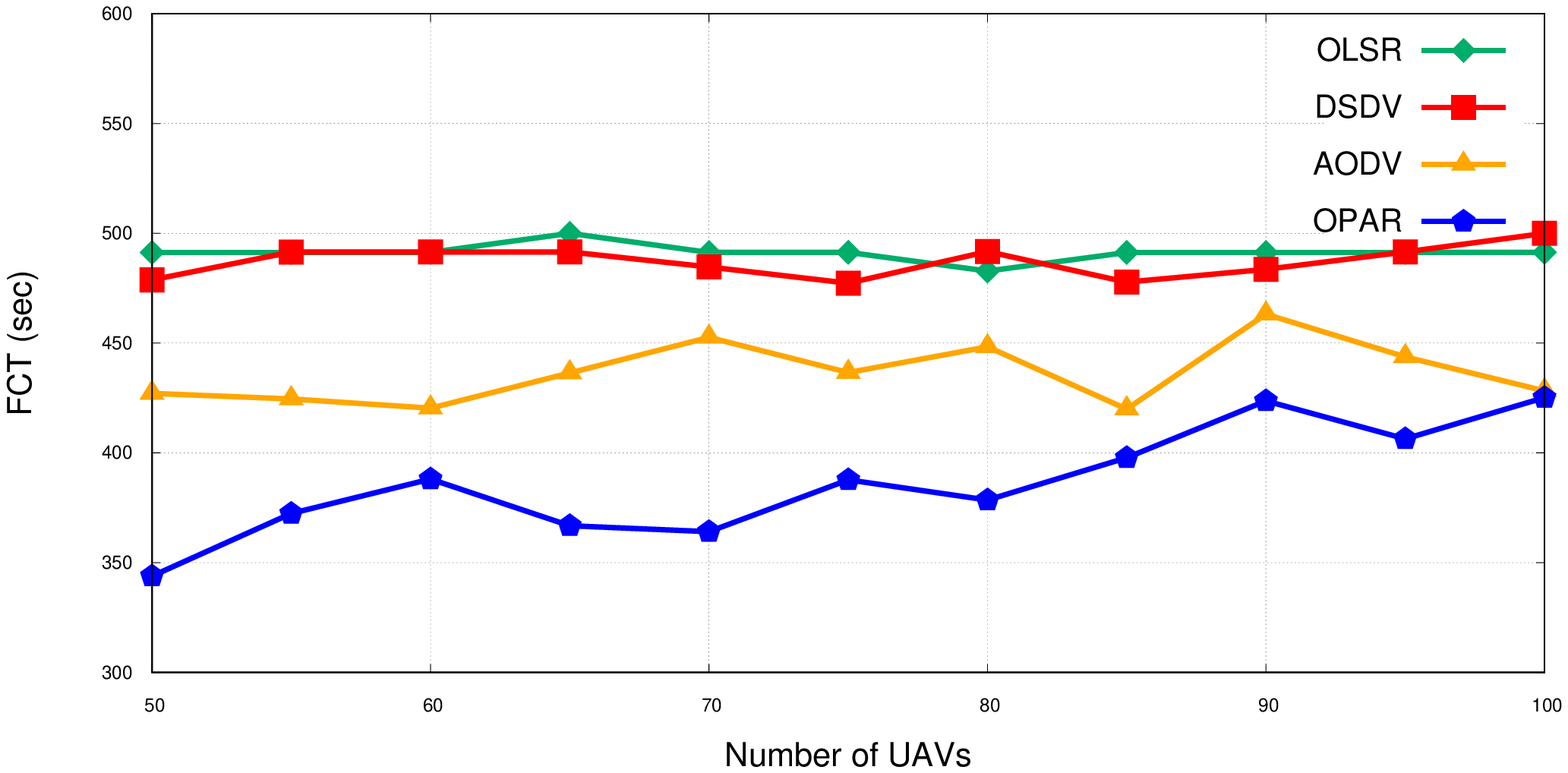}}
	\caption{A comparison of FCT for different loads and densities.}
	\label{fig::FCT}
\end{figure}

\section{Conclusion}
\label{sec::conclusion}

The high dynamicity of FANETs, along with their other specific characteristics, leads to the incompetence of the conventional routing algorithms. While we showed that the routing traffic overhead of such algorithms could be even higher than the transferred data in highly dynamic networks, they are very likely to fail in delivering the flow of data. To face this shortcoming, we proposed OPAR, an optimized predictive and adaptive routing solution that optimizes the network performance by considering the path lifetime and the path length. We exhaustively evaluated the network's performance using RWP and G-M mobility models for different loads and densities. All the combinations show the superiority of OPAR for the measured flow success rate, throughput, and flow completion time. As the future directions, we aim at adding the load balancing module to the proposed OPAR and develop the fully distributed version of the algorithm.

\section{acknowledgment}
This material is in part based upon work funded by AFRL Grant \#FA8750-20-1-1000. Any opinions, findings and conclusions or recommendations expressed in this material are those of the author(s) and do not necessarily reflect the views of the US government or AFRL.

\nocite{*}
\bibliographystyle{IEEEtran}
\bibliography{References}

\begin{thebibliography}{10}
\providecommand{\url}[1]{#1}
\csname url@samestyle\endcsname
\providecommand{\newblock}{\relax}
\providecommand{\bibinfo}[2]{#2}
\providecommand{\BIBentrySTDinterwordspacing}{\spaceskip=0pt\relax}
\providecommand{\BIBentryALTinterwordstretchfactor}{4}
\providecommand{\BIBentryALTinterwordspacing}{\spaceskip=\fontdimen2\font plus
\BIBentryALTinterwordstretchfactor\fontdimen3\font minus
  \fontdimen4\font\relax}
\providecommand{\BIBforeignlanguage}[2]{{%
\expandafter\ifx\csname l@#1\endcsname\relax
\typeout{** WARNING: IEEEtran.bst: No hyphenation pattern has been}%
\typeout{** loaded for the language `#1'. Using the pattern for}%
\typeout{** the default language instead.}%
\else
\language=\csname l@#1\endcsname
\fi
#2}}
\providecommand{\BIBdecl}{\relax}
\BIBdecl

\bibitem{WalidSurvey}
M.~{Mozaffari}, W.~{Saad}, M.~{Bennis}, Y.~{Nam}, and M.~{Debbah}, ``A tutorial
  on uavs for wireless networks: Applications, challenges, and open problems,''
  \emph{IEEE Communications Surveys Tutorials}, vol.~21, no.~3, pp. 2334--2360,
  2019.

\bibitem{UAV-CPSsurvey}
H.~{Wang}, H.~{Zhao}, J.~{Zhang}, D.~{Ma}, J.~{Li}, and J.~{Wei}, ``Survey on
  unmanned aerial vehicle networks: A cyber physical system perspective,''
  \emph{IEEE Communications Surveys}, vol.~22, no.~2, pp. 1027--1070, 2020.

\bibitem{Kuiper}
E.~{Kuiper} and S.~{Nadjm-Tehrani}, ``Mobility models for uav group
  reconnaissance applications,'' in \emph{2006 International Conference on
  Wireless and Mobile Communications (ICWMC'06)}, 2006, pp. 33--33.

\bibitem{Orfanus}
D.~{Orfanus} and E.~P. {de Freitas}, ``Comparison of uav-based reconnaissance
  systems performance using realistic mobility models,'' in \emph{2014 6th
  International Congress on Ultra Modern Telecommunications and Control Systems
  and Workshops (ICUMT)}, 2014, pp. 248--253.

\bibitem{survey4}
D.~{Shumeye Lakew}, U.~{Sa’ad}, N.~{Dao}, W.~{Na}, and S.~{Cho}, ``Routing in
  flying ad hoc networks: A comprehensive survey,'' \emph{IEEE Communications
  Surveys Tutorials}, vol.~22, no.~2, pp. 1071--1120, 2020.

\bibitem{fueling}
K.~{Sundar} and S.~{Rathinam}, ``Algorithms for routing an unmanned aerial
  vehicle in the presence of refueling depots,'' \emph{IEEE Transactions on
  Automation Science and Engineering}, vol.~11, no.~1, pp. 287--294, 2014.

\bibitem{packageDelivery}
B.~N. Coelho, V.~N. Coelho, I.~M. Coelho, L.~S. Ochi, R.~H. K., D.~Zuidema,
  M.~S. Lima, and A.~R. da~Costa, ``A multi-objective green uav routing
  problem,'' \emph{Computers \& Operations Research}, vol.~88, pp. 306 -- 315,
  2017.

\bibitem{redefine}
A.~{Coelho}, E.~N. {Almeida}, P.~{Silva}, J.~{Ruela}, R.~{Campos}, and
  M.~{Ricardo}, ``Redefine: Centralized routing for high-capacity multi-hop
  flying networks,'' in \emph{International Conference on Wireless and Mobile
  Computing, Networking and Communications (WiMob)}, 2018, pp. 75--82.

\bibitem{MIT}
J.~J. Enright, E.~Frazzoli, M.~Pavone, and K.~Savla, \emph{UAV Routing and
  Coordination in Stochastic, Dynamic Environments}.\hskip 1em plus 0.5em minus
  0.4em\relax Dordrecht: Springer Netherlands, 2015, pp. 2079--2109.

\bibitem{polsr}
S.~{Rosati}, K.~{Krużelecki}, G.~{Heitz}, D.~{Floreano}, and B.~{Rimoldi},
  ``Dynamic routing for flying ad hoc networks,'' \emph{IEEE Transactions on
  Vehicular Technology}, vol.~65, no.~3, pp. 1690--1700, 2016.

\bibitem{ON}
J.~M.~M. {Biomo}, T.~{Kunz}, and M.~{St-Hilaire}, ``Routing in unmanned aerial
  ad hoc networks: Introducing a route reliability criterion,'' in \emph{IFIP
  Wireless and Mobile Networking Conference}, 2014, pp. 1--7.

\bibitem{ns3}
\BIBentryALTinterwordspacing
(2020, Aug.). [Online]. Available: \url{https://www.nsnam.org/}
\BIBentrySTDinterwordspacing

\bibitem{karp21}
R.~M. Karp, ``Reducibility among combinatorial problems,'' \emph{R. E. Miller
  and J. W. Thatcher (editors). Complexity of Computer Computations}, pp. 85
  --103, 1972.

\end{thebibliography}


\end{document}